\newcommand*{\eg}{e.g.\@,\xspace}
\newcommand*{\cf}{cf.\@\xspace}
\newcommand*{\ie}{i.e.\@,\xspace}
\newcommand*{\resp}{resp.\@\xspace}
\newcommand*{\etc}{%
	\@ifnextchar{.}%
	{etc}%
	{etc.\@\xspace}%
}
\def\mc#1{\mathcal{#1}}
\def\set#1{\{#1\}}
\def\bs{\backslash}
\def\Inf{\bigsqcap}
\def\Sup{\bigsqcup}
\def\down{\mathop{\downarrow}}
\def\up{\mathop{\uparrow}}
\def\comp{\mathrel{\frown}}
\def\ncomp{\not\mathrel{\mkern-3mu\comp}}
\def\dec#1{\lfloor#1\rfloor}
\def\sem#1{[\![#1]\!]}
\def\tuple#1{\langle{#1}\rangle}
\def\Conf{\mathbb{C}} 
\def\ES{\mathbb{E}}
\newcommand{\po}{\mathbb{P}}
\newcommand{\pr}{\mathsf{Pr}}
\def\K{\mathsf{K}}
\def\pred{\mathit{pred}} 
\def\max{\mathit{max}} 
\def\dec#1{\lfloor#1\rfloor} 
\def\Pfin{{\mc P}_{\!\!\mathit{fin}}} 
\def\eqOrb{\mathrel{\sim_\odot}}
\def\ort#1{\mathrel{\perp_{#1}}} 
\def\imp{\mathrel{\Rightarrow}}
\def\nil{\emptyset}
\def\cf{\mathrel{\#}}
\def\Del{\mathrel{\vartriangle}}
\def\grpAct{\mathrel{\odot}}
\def\monAct{\mathrel{\cdot}}
\def\Del{\mathrel{\vartriangle}}
\def\grpAct{\mathrel{\odot}}
\def\Orb{\mathit{Orb}}
\def\0{\mathbf{0}}
\def\1{\mathbf{1}}
\definecolor{clem}{rgb}{0.0, 0.72, 0.92}
\definecolor{jk}{rgb}{0.5, 0.1, 0.42}
\let\csname enddf*\endcsname=\enddf
\newenvironment{definition*}{\begin{df*}}{\end{df*}}
\newtheorem{property}{Property}
\def\equationautorefname~#1\null{(#1)\null} 
\providecommand*{\toclevel@theorem}{0}
\providecommand*{\toclevel@proposition}{0}
\providecommand*{\toclevel@definition}{0}
\providecommand*{\toclevel@notation}{0}
\providecommand*{\toclevel@example}{0}
\providecommand*{\toclevel@description}{0}
\providecommand*{\toclevel@lemma}{0}
\providecommand*{\toclevel@remark}{0}
\providecommand*{\toclevel@corollary}{0}
\providecommand*{\toclevel@property}{0}
\providecommand*{\toclevel@conjecture}{0}
\colorlet{fillcolor}{gray!20}
\colorlet{fillborder}{gray!80}
\tikzset{
	smap/.style={%
		<->,
		>={Straight Barb[scale=0.8]},
		line width=1pt
	},
	ifont/.style={%
		font=\fontsize{14}{14}\selectfont
	},
	umap/.style={%
		->,
		>={Straight Barb[scale=0.8]},
		line width=1pt
	},
	mapl/.style={%
		midway,
		fill=white,
		inner sep = .05em
	},
	mmap/.style={%
		dash dot,
		->,
	},
	dmap/.style={%
		dashed,
		->,
		>={Straight Barb[scale=0.8]},
	},
	cf/.style={%
		densely dotted,%
		line width=.7pt,%
		draw,%
		postaction={%
			decorate,%
			decoration={%
				markings,%
				mark=at position 0.5 with {%
					\node[midway, fill=white, rounded corners=1.5, inner sep=.7pt]{\(\cf\)};}
			}
		}
	},
	ord/.style={%
		densely dotted,%
		line width=.7pt,%
		draw,%
		postaction={%
			decorate,%
			decoration={%
				markings,%
				mark=at position 0.5 with {%
					\node (symb) [midway, fill=white, rounded corners=1.5, inner sep=.7pt]{\(<\)};}
			}
		}
	},
	ordf/.style={%
		densely dotted,%
		line width=.7pt,%
		draw,%
		postaction={%
			decorate,%
			decoration={%
				markings,%
				mark=at position 0.5 with {%
					\node (symb) [midway, fill=fillcolor, rounded corners=1.5, inner sep=.7pt]{\(<\)};}
			}
		}
	},
	su/.style={%
		->,
		>={Straight Barb[scale=0.8]},
		line width=.5pt
	},
	conf/.style = {
		sloped,
		node contents={\#},
		fill=white,
		inner xsep=1pt,
		inner ysep=0pt
	}
}
\newcommand{\rOrbitConf}{
	\node (rempt) {\(\emptyset\)};
	\node (a) [above left = .3cm and .6cm of rempt] {\(\{a\}\)};
	\node (b) [above right  = .3cm and .6cm of rempt] {\(\{c\}\)};
	\node (c) [above = .5cm of rempt] {\(\{b\}\)};
	\node (ac) [above of = a] {\(\{a, b\}\)};
	\node (bc) [above of = b] {\(\{b, c\}\)};
	\node (abc) [above = 1.1cm of c] {\(\{a, b, c\}\)};
	\draw [su] (rempt) -- (a);
	\draw [su] (rempt) -- (b);
	\draw [su] (rempt) -- (c);
	\draw [su] (a) -- (ac);
	\draw [su] (b) -- (bc);
	\draw [su] (c) -- (ac);
	\draw [su] (c) -- (bc);
	\draw [su] (ac) -- (abc);
	\draw [su] (bc) -- (abc);
}
\newcommand{\lOrbitConf}{
	\node (lempt) {\(\emptyset\)};
	\node (a) [above left = .3cm and .6cm of lempt] {\(\{a\}\)};
	\node (b) [above right  = .3cm and .6cm of lempt] {\(\{b\}\)};
	\node (c) [above = .5cm of lempt] {\(\{c\}\)};
	\node (ac) [above of = a] {\(\{a, c\}\)};
	\node (bc) [above of = b] {\(\{b, c\}\)};
	\node (abc) [above = 1.1cm of c] {\(\{a, b, c\}\)};
	\draw [su] (lempt) -- (a);
	\draw [su] (lempt) -- (b);
	\draw [su] (lempt) -- (c);
	\draw [su] (a) -- (ac);
	\draw [su] (b) -- (bc);
	\draw [su] (c) -- (ac);
	\draw [su] (c) -- (bc);
	\draw [su] (ac) -- (abc);
	\draw [su] (bc) -- (abc);
}
\newcommand{\bOrbitConf}{	
	\node (bempt) {\(\emptyset\)};
	\node (a) [above left = .3cm and .6cm of bempt] {\(\{a\}\)};
	\node (b) [above right  = .3cm and .6cm of bempt] {\(\{b\}\)};
	\node (c) [above = 1.1cm of bempt] {\(\{a, b\}\)};
	\node (ac) [above of = a] {\(\{a, c\}\)};
	\node (bc) [above of = b] {\(\{b, c\}\)};
	\node (abc) [above = .5cm of c] {\(\{a, b, c\}\)};
	\draw [su] (bempt) -- (a);
	\draw [su] (bempt) -- (b);
	\draw [su] (a) -- (ac);
	\draw [su] (b) -- (bc);
	\draw [su] (a) -- (c);
	\draw [su] (b) -- (c);
	\draw [su] (ac) -- (abc);
	\draw [su] (bc) -- (abc);
	\draw [su] (c) -- (abc);
}
\newcommand{\tOrbitConf}{
	\node (tempt) {\(\emptyset\)};
	\node (a) [above left = .3cm and .6cm of tempt] {\(\{a\}\)};
	\node (b) [above right  = .3cm and .6cm of tempt] {\(\{c\}\)};
	\node (c) [above  = .3cm of tempt] {\(\{b\}\)};
	\node (ac) [above of = a] {\(\{a, b\}\)};
	\node (bc) [above of = b] {\(\{b, c\}\)};
	\node (abc) [above = .5cm of c] {\(\{a, c\}\)};
	\draw [su] (tempt) -- (a);
	\draw [su] (tempt) -- (b);
	\draw [su] (tempt) -- (c);
	\draw [su] (a) -- (abc);
	\draw [su] (a) -- (ac);
	\draw [su] (b) -- (abc);
	\draw [su] (b) -- (bc);
	\draw [su] (c) -- (ac);
	\draw [su] (c) -- (bc);
}
\newcommand{\lOrbitConfAfterB}{
	\node (empt) {};
	\node (a) [above left = .3cm and .6cm of empt] {};
	\node (b) [above right  = .3cm and .6cm of empt] {\(\emptyset\)};
	\node (c) [above = .5cm of empt] {};
	\node (ac) [above of = a] {};
	\node (bc) [above of = b] {\(\{c\}\)};
	\node (abc) [above = 1.5cm of c] {\(\{a, c\}\)};
	\draw [su] (b) -- (bc);
	\draw [su] (bc) -- (abc);
}
\newcommand{\bOrbitConfAfterB}{	
	\node (empt) {};
	\node (a) [above left = .3cm and .6cm of empt] {};
	\node (b) [above right  = .3cm and .6cm of empt] {\(\emptyset\)};
	\node (c) [above = 1.1cm of empt] {\(\{a\}\)};
	\node (ac) [above of = a] {};
	\node (bc) [above of = b] {\(\{c\}\)};
	\node (abc) [above = .5cm of c] {\(\{a, c\}\)};
	\draw [su] (b) -- (bc);
	\draw [su] (b) -- (c);
	\draw [su] (bc) -- (abc);
	\draw [su] (c) -- (abc);
}
\newcommand{\tOrbitConfAfterB}{
	\node (empt) {};
	\node (a) [above left = .3cm and .6cm of empt] {};
	\node (b) [above right  = .3cm and .6cm of empt] {};
	\node (c) [above  = .3cm of empt] {\(\emptyset\)};
	\node (ac) [above of = a] {\(\{a\}\)};
	\node (bc) [above of = b] {\(\{c\}\)};
	\node (abc) [above = .5cm of c] {};
	\draw [su] (c) -- (ac);
	\draw [su] (c) -- (bc);
}
\newcommand{\recCS}{
	\node (a1) [above left = .3cm of empt1] {\(\{a\}\)};
	\node (b1) [above right  = .3cm of empt1] {\(\{b\}\)};
	\node (bc1) [above right = .3cm of b1] {\(\{b, c\}\)};
	\node (ab1) at (bc1 -| empt1) {\(\{a, b\}\)};
	\node (bcd1) [above = .3cm of bc1] {\(\{b, c, d\}\)};
	\node (CS1) [below = .2cm of empt1] {\(\Conf\)};
	\draw [su] (empt1) -- (a1);
	\draw [su] (empt1) -- (b1);
	\draw [su] (a1) -- (ab1);
	\draw [su] (b1) -- (ab1);
	\draw [su] (b1) -- (bc1);
	\draw [su] (bc1) -- (bcd1);
}
\newcommand{\recsCS}{
	\node (d2) [above left = .3cm of empt2] {\(\{d\}\)};
	\node (c2) [above = .3cm of empt2] {\(\{c\}\)};
	\node (ac2) [above left = .3cm of c2] {\(\{a, c\}\)};
	\node (bc2) [above right = .3cm of c2] {\(\{b, c\}\)};
	\node (abc2) at (bcd1 -| empt2) {\(\{a, b, c\}\)};
	\node (CS2) [below = .2cm of empt2] {\(\set{b, c} \grpAct \Conf\)};
	
	\draw [su] (empt2) -- (d2);
	\draw [su] (empt2) -- (c2);
	\draw [su] (c2) -- (ac2);
	\draw [su] (c2) -- (bc2);
	\draw [su] (ac2) -- (abc2);
	\draw [su] (bc2) -- (abc2);
}
\newcommand{\recES}{
	\node (a) [left  = .8cm of marq1] {$\set{a}$};
	\node (b) [right = .8cm of marq1] {\(\{b\}\)};
	\node (bc) [above = .7cm of marq1] {\(\{b, c\}\)};
	\node (bcd) [above = .8cm of bc] {\(\{b, c, d\}\)};
	\node (ES1) [below = .3cm of marq1] {\(\ES\)};
	\draw [cf] (a) -- (bc);
	\draw [ordf] (b) -- (bc) node[midway, xshift=9pt, yshift=20pt] {$X$};
	\draw [ord] (bc) -- (bcd);
	
	\begin{scope}[on background layer]
		\node [fill=fillcolor, fit=(b) (bc), rectangle, rotate fit = 45, rounded corners=.35cm, inner sep=9pt, inner xsep = -18pt, draw=fillborder] {};
	\end{scope}
}
\newcommand{\recsES}{
	\node (a2) [left = .7cm of marq2] {$\set{a}$};
	\node (b2) [right = .7cm of marq2] {\(\{b\}\)};
	\node (bc2) [above = .7cm of marq2] {\(\{b, c\}\)};
	\node (bcd2) [above = .8cm of bc2] {\(\{b, c, d\}\)};
	\node (ES2) [below = .3cm of marq2] {\(X \grpAct \ES
		\)};
	\draw [cf] (bc2) -- (bcd2);
	\draw [ord] (bc2) -- (b2);
	\draw [ord] (bc2) -- (a2);	
}
	\newcommand{\rOrbit}{
	\begin{tikzpicture}
		\rOrbitConf
	\end{tikzpicture}
}
\newcommand{\tOrbit}{
	\begin{tikzpicture}
		\tOrbitConf
	\end{tikzpicture}
}
\newcommand{\lOrbit}{
	\begin{tikzpicture}
		\lOrbitConf
	\end{tikzpicture}
}
\newcommand{\bOrbit}{
	\begin{tikzpicture}
		\bOrbitConf
	\end{tikzpicture}
}    
\newenvironment{proofwq}{\begin{pf}}{\end{pf}} 
\begin{document}

\begin{frontmatter}
  \title{Reversible Computations are Computations}
  \author{Clément Aubert\thanksref{a}\thanksref{clememail}\thanksref{NSF}}	
   \author{Jean Krivine\thanksref{b}\thanksref{jeanemail}}		
   \address[a]{Augusta University\\
   	Augusta, GA, USA}
   \thanks[clememail]{Email: \href{mailto:caubert@augusta.edu} {\texttt{\normalshape
        caubert@augusta.edu}}} 
  \thanks[NSF]{This work has benefited from the support from the National Science Foundation under \href{https://www.nsf.gov/awardsearch/showAward?AWD_ID=2242786}{Grant No. 2242786}. 
  }
  \address[b]{CNRS, Univ. Paris Cité, IRIF\\
 				Paris, France} 
   \thanks[jeanemail]{Email: \href{mailto:jean.krivine@irif.fr} {\texttt{\normalshape
			jean.krivine@irif.fr}}} 


\begin{abstract} 
		Causality serves as an abstract notion of time for concurrent systems. A computation is causal, or simply valid, if each observation of a computation event is preceded by the observation of its causes. The present work establishes that this simple requirement is equally relevant when the occurrence of an event is invertible. We propose a conservative extension of causal models for concurrency that accommodates reversible computations. We first model reversible computations using a symmetric residuation operation in the general model of configuration structures. We show that stable configuration structures, which correspond to prime algebraic domains, remain stable under the action of this residuation. We then derive a semantics of reversible computations for prime event structures, which is shown to coincide with a switch operation that dualizes conflict and causality.
\end{abstract}
\begin{keyword}
concurrency, %
reversible computation, %
configuration structures, %
event structures, %
process calculi, %
non\hyp interleaving semantics
\end{keyword}
\end{frontmatter}
\section{Introduction}

In physics, the concept of time being \enquote{reversible} stems from the fact that many fundamental physical laws are symmetric with respect to time. This means that the equations governing certain physical processes remain unchanged whether time moves forward or backward. For example, applying Newton's laws of motion to a celestial body does not reveal whether the body is moving forward or backward in time, as the equations validate the trajectory in both directions. The present work explores this intriguing idea in the context of models of concurrent systems, where time is abstracted to a weaker notion of causality or precedence. Specifically, it answers positively the following question: \emph{Are standard models of concurrent systems also models of reversible ones?}

This is not 
a rhetorical question, as unlike time in physics, computations can, in practice, be reversed. Reversible computations have been studied in a variety of contexts~\cite{rc2020}, such as minimising entropy~\cite{BuhTroVit01,She_etal03,Abr05}, resolving deadlocked transactions~\cite{Pra87,DanKri05}, and designing abstract machines and software capable of debugging concurrent execution traces~\cite{Lie_etal12,Lan_etal18}. With these objectives, abstract language models for reversible computing have been developed using process algebraic frameworks~\cite{DanKri04,Lan_etal10,PhiUli06,DomKriVar13,Lan_etal21,Aub24} or P/T nets~\cite{DanKriSob06,Bar_etal18,Mel_etal21}.

This paper proposes to study such reversible calculi from a denotational perspective, \eg focusing on relating computation events rather than producing them. 

Providing a denotational semantics of reversible concurrent programs has been studied over the last 10 years, in a line of research that considers backward events as first class citizens. It follows that the causal structure of computations induced by backward events is modeled by specific relations, such as reverse causality~\cite{PhiUli15,Uli_etal18,Mel_etal21}.

In this work, we aim to determine whether the concept of reversible computations necessitates disrupting the state-of-the art models of concurrency or can instead be seamlessly integrated. To achieve this, we examine event structure semantics for concurrent systems within the same framework established by Winskel and others since the 1980s~\cite{BouCas89,NiePloWin81,Win82}. 

We base the present work on a simple and general model of concurrent systems called configuration structures, where a process is denoted by the configurations (the collection of observable events) it can produce. This approach, which expresses the combination of observations that are allowed or disallowed as a partial order of configurations, was foundational to Winskel's pioneering work and has been extensively studied in subsequent research~\cite{Bal_etal17,GlaPlo95,GlaPlo09}.

After 
introducing configuration structures and their semantics (\autoref{sec:conf-structures}), we show that reversible computations can be understood as a (partial) group action, which we call symmetric residuation, of configuration structures on themselves (\autoref{sec:group-action}). This generalizes the notion of residuals, a standard method of retrieving a transition system from a denotational model~\cite{BouCas87,BouCas89}, while remaining purely within the simple partial order theory. Configuration structures, in their full generality, can model concurrent systems with arbitrary causal and conflict structures. While these structures may not always correspond to the computation of a traditional algebraic program like the calculus of communicating systems (CCS~\cite{Mil89}), a well-known property of these structures is that those satisfying certain stability axioms correspond to prime algebraic domains. These domains provide a semantics for CCS-like calculi \emph{via} prime event structures~\cite{Win82}.  The first main result of this paper is to show that \textbf{stable configuration structures remain stable under symmetric residuation} (\autoref{thm:stable}), which is a guarantee that symmetric residuation produces configuration structures that are also prime event structures. On these premises, the second main contribution of this paper (\autoref{thm:switch}) is to show that \textbf{transporting symmetric residuation to (prime) event structures corresponds to a switch operation} on the graph that represents the causality and conflict relation. The term switch is intended to draw a connection with a class of well-known graph isomorphisms called Seidel switches \cite{Seidel73} which is discussed in 
conclusion 
 (\autoref{sec:ccl}).

Importantly, 
we show that the 
backward or forward orientation of an event exists only relatively to a given past state:
like in our example of the celestial body in motion, our approach yields a causal semantics for reversible computation in which the notion of \enquote{backward moves} only appears when configuration structures are equipped with a distinguished configuration that we call \emph{referential} (Definition~\ref{def:pointed-structure}). We show that these \textbf{pointed configuration structures correspond to a form of polarized event structures}, 
where negative events come and go under the action of our switching operation (\autoref{thm:pol-switch}), providing an adequate model 
for the explicit orientation of reversible process calculi.

\section{Configuration structures}\label{sec:conf-structures}
We recall the definition of general configuration structures as presented for instance in 
\cite[Definition 1.1]{GlaPlo09}.

\begin{definition}
\label{def:configuration-structure}
	A \emph{configuration structure} is a pair $\Conf=\tuple{E,X}$ where $E$ is a countable set of computation \emph{events} (ranged over by lower case letters \(a\) through \(f\)) and $X\subseteq \mc P(E)$ is a set of \emph{configurations} (ranged over by lower case letters \(x\), \(y\) and \(z\)). 
	A configuration structure is \emph{rooted} when $\emptyset\in X$.
\end{definition}

For such configurations, we use the direct notation $x\in \Conf$ when $x\in X$.
In the remainder of the paper we only consider rooted configurations, and use $\mc C_E$ to denote the set of configuration structures over $E$.

Configurations can be partially ordered by set inclusion. Thus, for a given configuration structure $\Conf=\tuple{E,X}$ and for any $Y\subseteq X$, we can use the standard order theoretic notations:
\begin{align*}
	{\down}_\Conf Y & \coloneqq \set{x \in \Conf \mid\exists y\in Y: x\subseteq y}\\
	{\up}^\Conf Y & \coloneqq \set{x \in \Conf \mid \exists y\in Y:y\subseteq x}
\end{align*}
We also write $x=\Sup^\Conf Y$ whenever $Y$ has a least upper bound $x \in \Conf$. Similarly $x=\Inf_\Conf Y$ denotes the greatest lower bound of $Y$ when it exists in $\Conf$. Lastly, for any two configurations $x,y\in \Conf$, we write $x\comp_\Conf y$ whenever $\Sup^\Conf\set{x,y}$ exists and say that \(x\) and \(y\) are \emph{compatible in $\Conf$}.
Two configuration structures are \emph{equivalent} if there exists a bijection on events that preserves inclusion of configurations and that equates them.

\begin{example}\label{ex:configuration-structures}
	We give below three examples of configuration structures of $\mc C_{\set{a,b,c}}$, where an arrow denotes inclusion. Notice that $\set{a}\comp_{\Conf_0} \set{b}$, $\set{b}\comp_{\Conf_0}\set{c}$ and $\set{a}\comp_{\Conf_0}\set{c}$ but $\Sup^{\Conf_0}\set{a,b,c}$ is not defined, which models a ternary conflict where at most two events can trigger. Configuration $\Conf_1$ models a form of disjunctive causality: the event $c$ can be observed provided event $a$ or event $b$ (or both) appeared first. Lastly, configuration $\Conf_2$ models the fact that events $a$ and $b$ are not compatible unless $c$ is triggered.
	
	\noindent{
	\centering
	
	\begin{tikzpicture}
		\tOrbitConf
		\node (c) [below = .2cm of tempt] {\(\Conf_0\)};
		\begin{scope}[xshift = 15em]
			\bOrbitConf
			\node (c) [below = .2cm of bempt] {\(\Conf_1\)};
		\end{scope}
		\begin{scope}[xshift = 30em]
			\lOrbitConf
			\node (c) [below = .2cm of lempt] {\(\Conf_2\)};
		\end{scope}
	\end{tikzpicture}

}
\end{example}

Configuration structures can be equipped with an operation of \emph{residuation}~\cite{BouCas87}, which describes which configurations remain reachable once a set of events has been triggered. 

\begin{definition}
\label{def:residuation}
	Let  $\Conf\in\mc C_E$.  For all finite $x\in \Conf$, we define the \emph{residual} of $\Conf$ \emph{after} $x$:
	\[
	x\monAct\Conf \coloneqq \tuple{E,\set{z\in\mc P(E)\mid \exists y \in {\up}^\Conf\set{x}: z=y\bs x}}
	\]
	where $y\bs x \coloneqq\set{a\in y\mid a\not\in x}$ is the classical set difference.
\end{definition}

Since $(\mc P(E),\cup)$ is a commutative monoid, residuation can be understood as the action of a (partial) monoid on itself: 

\begin{proposition}[Monoid action]\label{prop:monoid-action}
	The operator $(\monAct):\Pfin(E)\times\mc C_E \to \mc C_E$ is a monoid action on configuration structures, \ie:
	\begin{itemize}
		\item for all configurations $x$, $y$,  if $y\in \Conf$ and $x\in y\cdot \Conf$, then $x\monAct (y\monAct \Conf)=(x\cup y)\monAct\Conf$.
		\item $\emptyset\monAct\Conf = \Conf$.
	\end{itemize}
\end{proposition}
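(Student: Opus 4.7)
The plan is to verify the two monoid action axioms directly by unfolding the definition of residuation and reducing them to elementary set-theoretic identities on $\mc{P}(E)$. Before proving associativity, I need a key preliminary observation: if $x \in y \monAct \Conf$, then $x$ is of the form $w \setminus y$ for some $w \in \Conf$ with $w \supseteq y$; since $w \setminus y$ is automatically disjoint from $y$, this representation is in fact unique, with $w = x \cup y$. Hence the hypothesis $x \in y \monAct \Conf$ gives for free both that $x \cap y = \emptyset$ and that $x \cup y \in \Conf$, which is what ensures the right-hand side $(x \cup y) \monAct \Conf$ is well-defined in the first place.

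For the identity axiom, I would simply unfold: since $\Conf$ is rooted, $\emptyset \in \Conf$, so $\up^{\Conf}\{\emptyset\} = \Conf$, and $y \setminus \emptyset = y$ for every $y$. Thus the configurations of $\emptyset \monAct \Conf$ are literally those of $\Conf$, and the event sets coincide by construction.

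For associativity, I would unfold both sides and check double inclusion on configurations (the event set $E$ is preserved throughout by definition). On the left, a configuration of $x \monAct (y \monAct \Conf)$ has the form $v \setminus x$ with $v \in y \monAct \Conf$ and $x \subseteq v$; writing $v = w \setminus y$ with $w \in \Conf, w \supseteq y$, the condition $x \subseteq w \setminus y$ together with $x \cap y = \emptyset$ becomes $w \supseteq x \cup y$. Applying the elementary identity $(w \setminus y) \setminus x = w \setminus (x \cup y)$ shows this is exactly a configuration of $(x \cup y) \monAct \Conf$. The converse inclusion is the same chain of equivalences read backwards: any $z = w \setminus (x \cup y)$ with $w \supseteq x \cup y$ may be rewritten as $(w \setminus y) \setminus x$, and $w \setminus y$ is a configuration of $y \monAct \Conf$ containing $x$.

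The proof is essentially a bookkeeping exercise on set differences. I do not expect a genuine obstacle, but the one point deserving care is the well-definedness of the right-hand side, which is why establishing $x \cup y \in \Conf$ from the hypothesis $x \in y \monAct \Conf$ must be done before the main calculation rather than assumed silently.
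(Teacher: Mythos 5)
Your proposal is correct and follows essentially the same route as the paper's proof: unfold Definition~\ref{def:residuation} and reduce associativity to the set identity $(z'\bs y)\bs x = z'\bs(x\cup y)$, with the unit law immediate from the definition. You are merely more explicit than the paper about the bookkeeping it leaves implicit, namely tracking the upward-closure conditions and noting that the hypothesis $x\in y\monAct\Conf$ yields $x\cup y\in\Conf$, so that $(x\cup y)\monAct\Conf$ is well-defined.
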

\begin{proof}
	Each configuration $z\in x \cdot (y \cdot \Conf)$ is of the form $z = (z'\bs y) \bs x=z'\bs (x\cup y)$ for some $z'\in \Conf$.  So $z\in x \cdot (y\cdot \Conf)$ is equivalent to $z\in (x\cup y)\cdot\Conf$. 
	The fact that $\emptyset\cdot\Conf=\Conf$ is a direct consequence of the definition of the residuation operation.
\end{proof}

\begin{example}\label{ex:residuation}
	We give below the residuals of the configurations of Example~\ref{ex:configuration-structures} after $\set{b}$. Notice that each configuration residual can still trigger the events $a$ and $c$, but they are incompatible in $\set{b}\cdot\Conf_0$, independent in $\set{b}\cdot\Conf_1$ and sequential in $\set{b}\cdot\Conf_2$.
	
	\noindent{
	\centering
	
	\begin{tikzpicture}
		\tOrbitConfAfterB	
		\node (c) [below = .2cm of c] {\(\set{b}\cdot\Conf_0\)};
		\begin{scope}[xshift = 15em]
		\bOrbitConfAfterB
		\node (c) [below = .2cm of b] {\(\set{b}\cdot\Conf_1\)};
	\end{scope}
		\begin{scope}[xshift = 30em]
		\lOrbitConfAfterB
		\node (c) [below = .2cm of b] {\(\set{b}\cdot\Conf_2\)};
	\end{scope}
	\end{tikzpicture}

}
\end{example}


For a Configuration structure $\Conf\coloneqq (E,X)$, consider  $\mc R(\Conf)\coloneqq \set{\Conf'\in\mc C_E\mid \exists x\in\Conf:\Conf'=x\monAct\Conf}$ the \emph{reachable} residuals of $\Conf$.
Given an \emph{initial} configuration structure $\Conf_0$, one can build the \emph{labelled transition system}  $\mc T\coloneqq(\Conf_0, \to, \mc R(\Conf_0))$ where $\to\subseteq \mc R(\Conf_0)\times \mc P(E)\times \mc R(\Conf_0)$ is the transition relation defined as $\Conf\to_x\Conf'$ if and only if $\Conf'=x\monAct\Conf$. This models program states as configuration structures $\Conf_0,\Conf_1,\dots$ and computations as sequences of transition between them. Notice that since $\to$ is engendered by a monoid action, the transition relation is closed by composition (see Proposition~\ref{prop:monoid-action}). It is also equipped with causal and conflict informations (which is sometimes called a truly concurrent semantics) since the transitions $\Conf\to_x\Conf'$ and $\Conf\to_y\Conf''$ are concurrent if $x\comp_\Conf y$ and conflicting otherwise.

\section{Symmetric residuation}\label{sec:group-action}
\subsection{Building up intuitions}
To model reversible computation we wish to keep interpreting program states as regular configurations structures: sets of events ordered by inclusion. It would be tempting to consider introducing negative events as first class citizens, but this would force one to consider configurations modulo some equivalence $\set{a,a^-}\sim\emptyset$, which would entail unnecessary complications as set union is no longer associative under this equivalence (think of $(\set{a}\cup\set{a})\cup\set{a^-}$ and $\set{a}\cup(\set{a}\cup\set{a^-})$). 

The alternative is to work with a different residuation operation which yields a group action instead of a monoid one. To illustrate this, consider a system that can perform $a$ or $b$, but not both, and then terminates. This is modeled by the configuration structure $\Conf=\tuple{\set{a,b},\set{\emptyset,\set{a},\set{b},\set{a,b}}}$. Classical residuation yields $\Conf\to_{\set{a}}\tuple{\set{a,b},\set{\emptyset}}$, as the event $b$ is no longer visible in the future of $a\monAct\Conf$.
\def\ul#1{\underline{#1}}
Making this computations reversible requires a memory or special markers \cite{Lan_etal21} in order to keep track of past events. Following this intuition, we will introduce a reversible residuation operator, which takes care of producing the "memory" of having consumed $a$ and produce transitions of the form: 
$$
\tuple{\set{a,b},\set{\emptyset,\set{a},\set{b},\set{a,b}}}\to_{\set{a}}\tuple{\set{a,b},\set{\emptyset,\set{\ul a},\set{\ul a,b}}}
$$ 
whose target state reads as "one can trigger event $a$ (from the memory), and then one can trigger $b$". We will see that this transition can be derived using a symmetric version of the classical residuation, which uses symmetric set difference instead of set difference (Definition~\ref{def:sym-residuation}), but before diving into formal definitions, we can make a few more comments on the configuration structure $\tuple{\set{a,b},\set{\emptyset,\set{\ul a},\set{\ul a,b}}}$. First, the underlined events are pure annotations, in order to keep track of the fact that the configuration $\set{\ul a}$ contains a past event. As all occurrences of $a$ are underlined, there is no possible clash between a "forward" $a$ and a "backward" one. This annotation must be understood in a similar way as tracking a redex in the $\lambda$-calculus: useful for proving convergence, but having no impact on the semantics (see for instance Ref~\cite{Sel08}, Section 2.4). A consequence is that symmetric residuation on configuration structures will have a constant support: in our example $E=\set{a,b}$ throughout computations, although underlined events will vary. Another important aspect is that symmetric residuation is conservative: erasing configurations with an underlined component corresponds to performing classical residuation. Lastly, we will see that symmetric residuation is a group action on configuration structure, with events being their own inverse. In our example, we will have:
$$
\tuple{\set{a,b},\set{\emptyset,\set{a},\set{b},\set{a,b}}}\to_{\set{a}}\tuple{\set{a,b},\set{\emptyset,\set{\ul a},\set{\ul a,b}}}
\to_{\set{\ul a}} \tuple{\set{a,b},\set{\emptyset,\set{a},\set{b},\set{a,b}}}
$$ 
which comes back to the initial structure.

\subsection{Formally}

In order to make these intuitions formal, we first introduce an operation of \emph{symmetric residuation}, a generalization of the residuation of Definition~\ref{def:residuation}, which takes care of producing the causal structure of reversible computations. We introduce explicit backward (the underlined events of our example in the previous section) and forward events in \autoref{sec:pointed}.

Recall that $\Del$, the symmetric set difference, is defined as 
\(x\Del y \coloneqq  (x\cup y)\bs (x\cap y)\).

\begin{property}\label{prop:abelian}
	Any power set equipped with the symmetric difference forms an abelian group.
\end{property}

In particular $(\mc P(E),\Del)$ has an abelian group structure, where every event is its own inverse. Its action on configuration structures generates orbits that provide the mathematical foundations on which we will model reversible computations.

\begin{definition}
\label{def:sym-residuation}
	Let  $\Conf\in\mc C_E$.  For all finite $x\in\Conf$, we define the \emph{symmetric residual of $\Conf$ after $x$}:
	\[	x\grpAct\Conf \coloneqq\tuple{E,\set{z\in\mc P(E)\mid \exists y\in\Conf: z=y\Del x}}\text{.}\]
\end{definition}

\begin{proposition}[Group action]
	\label{prop:gp-act}
	The operator $(\grpAct):\Pfin(E)\times\mc C_E \to \mc C_E$ is a \emph{group action} on configuration structures, i.e.\@:
	\begin{itemize}
		\item for all finite configurations $x$, $y$, if \(x\in \Conf\) and $y\in x\grpAct\Conf$, then \(x\grpAct (y\grpAct \Conf) =(x\Del y)\grpAct\Conf\).
		\item $\emptyset\grpAct\Conf = \Conf$.
	\end{itemize}
\end{proposition}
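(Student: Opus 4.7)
The plan is to mimic the proof of \autoref{prop:monoid-action} for the monoid action, replacing the associativity of $\cup$ by the full abelian group structure of $(\mc P(E), \Del)$ granted by \autoref{prop:abelian}. The definition of $\grpAct$ is pointwise in the symmetric difference: a configuration $z$ lies in $x \grpAct \Conf$ exactly when some $z' \in \Conf$ satisfies $z = z' \Del x$. Both axioms then reduce to manipulating this membership predicate.

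For the neutral element axiom, I would simply unfold the definition and observe that $y \Del \emptyset = y$, which gives $\emptyset \grpAct \Conf = \Conf$ on the nose (and in particular justifies that the support $E$ is preserved).

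For the composition axiom, the core step is a quantifier chase: $z \in x \grpAct (y \grpAct \Conf)$ iff there is some $w \in y \grpAct \Conf$ with $z = w \Del x$, iff there is some $z' \in \Conf$ with $z = (z' \Del y) \Del x$. Associativity and commutativity of $\Del$ (\autoref{prop:abelian}) then rewrite this as $z = z' \Del (x \Del y)$, which is exactly the membership condition for $(x \Del y) \grpAct \Conf$. Reading the equivalence in both directions yields equality as sets of configurations, and hence as configuration structures since the event set $E$ is the same on both sides by definition. The side conditions $x \in \Conf$ and $y \in x \grpAct \Conf$ play no role in this algebraic identity; they are used solely to certify that the intermediate residuations are legitimate in the sense of \autoref{def:sym-residuation}, analogously to the bookkeeping hypotheses $y \in \Conf$ and $x \in y \monAct \Conf$ in the monoid case.

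I do not anticipate a real obstacle: everything reduces to the algebra of $\Del$ applied element-by-element. The only point requiring care is to distinguish the algebraic identity (which holds for arbitrary finite subsets of $E$) from the reachability hypotheses of the statement (which ensure that the successive residuations $y \grpAct \Conf$ and $x \grpAct (y \grpAct \Conf)$ are well-defined configuration structures rather than merely formal expressions), and not to conflate the two.
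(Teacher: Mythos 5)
Your verification of the set equality itself is fine, and it is essentially the same routine quantifier chase already carried out for the monoid action in Proposition~\ref{prop:monoid-action}; that is not where the content of this proposition lies. The genuine gap is in your closing claim that the hypotheses ``play no role in this algebraic identity'' and serve only to legitimise the left-hand residuations $y\grpAct\Conf$ and $x\grpAct(y\grpAct\Conf)$. You never address the \emph{right-hand} side: by Definition~\ref{def:sym-residuation} the operation $\grpAct$ is partial, so $(x\Del y)\grpAct\Conf$ is a configuration structure only if $x\Del y$ is itself a finite configuration of $\Conf$, and this is not automatic. Establishing it is precisely what the paper's proof consists of, and it is the substance of calling $\grpAct$ a group action here: the composite of two legal residuation steps must again be a legal step.

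The missing argument is short but must be given, and it uses the hypotheses in an essential way. Reading the premise as the paper's proof does (the left-hand side is defined, i.e.\ $y\in\Conf$ and $x\in y\grpAct\Conf$), one has $x=y\Del z$ for some $z\in\Conf$, hence $x\Del y=(y\Del z)\Del y=z\Del(y\Del y)=z\in\Conf$ by Property~\ref{prop:abelian}; with the hypotheses as literally stated ($x\in\Conf$, $y\in x\grpAct\Conf$), the same computation gives $x\Del y=z\in\Conf$ from $y=x\Del z$. In either reading $x\Del y\subseteq x\cup y$ is finite, so the right-hand side is well defined, and only then does your pointwise computation yield the stated equality. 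Your remark that the identity ``holds for arbitrary finite subsets of $E$'' is true only if one silently extends Definition~\ref{def:sym-residuation} beyond configurations of $\Conf$; within the paper's definitions, omitting this closure argument conflates exactly the two things you yourself warned against conflating.
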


\begin{proof}
	We need to show that $x\Del y$ is a configuration of $\Conf$. If $x\grpAct (y\grpAct \Conf)$ is defined, then $y$ is a configuration of $\Conf$ and $x$ is a configuration of $y\grpAct\Conf$. Now all configurations of $y\grpAct\Conf$ must be of the form $y\Del z$ for some configuration $z$ of $\Conf$. Therefore $x=y\Del z$ and thus, $x\Del y = (y\Del z)\Del y = z\Del (y\Del y)=z$ using Proposition~\ref{prop:abelian}. It entails that $x\Del y$ is indeed a configuration of $\Conf$.
	
	The fact that $\emptyset\grpAct\Conf = \Conf$ is a consequence of $\emptyset$ being the neutral element for the group $(\mc P(E),\Del)$.
\end{proof}

Notice that symmetric residuation $x\grpAct\Conf$ acts uniformly on all the configurations of $\Conf$. This is in contrast with the classical residuation $x\monAct\Conf$ (Definition~\ref{def:residuation}), which discards configurations of $\Conf$ that are not above $x$. Yet both symmetric and classical residuations coincide on configurations above $x$.

\begin{proposition}[Conservative extension]
	Let $\Conf\in\mc C_E$.
	For all finite $x \in \Conf$, if \(z \in x \monAct \Conf\) then \(z \in x \grpAct \Conf\).
\end{proposition}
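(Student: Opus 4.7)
The plan is to unpack the two residuation definitions and observe that, on the restricted class of configurations witnessing membership in $x \monAct \Conf$, the symmetric difference degenerates into ordinary set difference.

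First I would fix the hypothesis: $z \in x \monAct \Conf$ means, by \autoref{def:residuation}, that there exists $y \in {\up}^\Conf\set{x}$ with $z = y \bs x$, i.e.\ some $y \in \Conf$ satisfying $x \subseteq y$ and $z = y \bs x$. The key observation is then purely set-theoretic: whenever $x \subseteq y$, we have $x \cap y = x$ and $x \cup y = y$, so
\[
y \Del x = (x \cup y) \bs (x \cap y) = y \bs x = z.
\]
Thus the \emph{same} witness $y \in \Conf$ that places $z$ in $x \monAct \Conf$ also witnesses $z \in x \grpAct \Conf$ via \autoref{def:sym-residuation}, and we are done. The support of both structures is $E$ by construction, so the inclusion is genuinely an inclusion of configuration sets.

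There is no real obstacle here; the lemma is essentially the remark that symmetric difference coincides with set difference on nested pairs. The only thing to double-check is that one uses the hypothesis $x \in \Conf$ and the restriction to finite $x$ only to ensure that both $x \monAct \Conf$ and $x \grpAct \Conf$ are well-defined on the same terms as in Definitions~\ref{def:residuation} and~\ref{def:sym-residuation}; the set-theoretic computation itself does not require finiteness.
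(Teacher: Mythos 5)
Your proof is correct and follows essentially the same route as the paper's: take the witness $y \supseteq x$ from the definition of $x \monAct \Conf$ and note that $y \Del x = y \bs x$ when $x \subseteq y$, so the same $y$ witnesses $z \in x \grpAct \Conf$. The added remarks about the common support $E$ and the role of finiteness are harmless and do not change the argument.
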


\begin{proof}
	Each configuration \(z\) in \(x \monAct \Conf\) is such that \(\exists y\), \(x \subseteq y\) and \(z = y \bs x\).
	Since \(x \subseteq y\), \(y \bs x = (y \cup x) \bs (x \cap y) = x \Del y\) and hence \(z = x \Del y \in x \grpAct \Conf\).
\end{proof}

\begin{definition}

	The \emph{orbit} $\Orb(\Conf)$ of a configuration structure $\Conf\in\mc C_E$ is defined as:
	\[\Orb(\Conf)\coloneqq \set{\Conf'\in\mc C_E\mid \Conf'=x\grpAct\Conf\hbox{ for some finite $x\in\Conf$}}\text{.}\]
	The associated equivalence relation ${}\eqOrb{}\subseteq {\mc C_E\times \mc C_E}$ is:
	\(\Conf\sim_\odot\Conf' \hbox{ if and only if } \Conf' \in\Orb(\Conf)\).
\end{definition}

\begin{example}
	\label{ex:orbit}
	The configuration structures of Example~\ref{ex:configuration-structures} are equivalent by symmetric residuation, \ie $\Conf_0\eqOrb \Conf_1\eqOrb \Conf_2$, as pictured in \autoref{fig:orbit}, with \eg $\odot_{\set{a}}:y\mapsto \set{a} \Del y$.
\end{example}

\begin{figure}
{
	
\centering
\begin{tikzpicture}
	\def\radius{4cm}
	\draw[
	draw = none, 
	name path=c] (0:\radius) arc (0:360:\radius);
	\node[name path=top] at (90:\radius) (top) {\tOrbit}; 
	\node[name path=left] at (0:\radius) (right) {\lOrbit}; 
	\node[name path=right] at (180:\radius) (left) {\rOrbit}; 
	\node[name path=bottom] at (-90:\radius) (bottom) {\bOrbit}; 
	
	\path[name intersections={of=top and c, by={tl, tr}}]; 
	\path[name intersections={of=left and c, by={lt, lb}}]; 
	\path[name intersections={of=right and c, by={rt, rb}}]; 
	\path[name intersections={of=bottom and c, by={bl, br}}]; 
	
	\draw[smap] (tr) to[bend left=12] node[mapl, ifont]{\(\odot_{\set{c}}\)} (lt);
	\draw[smap] (tl) to[bend right=12] node[mapl, ifont]{\(\odot_{\set{b}}\)} (rt);
	\draw[smap] (rb) to[bend right=12] node[mapl, ifont]{\(\odot_{\set{a}}\)} (bl);
	\draw[smap] (lb) to[bend left=12] node[mapl, ifont]{\(\odot_{\set{a, b, c}}\)} (br);
	\draw[smap] (bottom.north) to [bend left] node[pos=.6, fill=white, inner sep=.05em, ifont]{\(\odot_{\set{a, b}}\)}
		(top.south);
	\draw[smap] (left.east) to [bend right] node[pos = .6, fill=white, inner sep=.05em, ifont]{\(\odot_{\set{b, c}}\)} (right.west);
\end{tikzpicture}

}
	\caption{Orbit of \(\Conf_0\)}\label{fig:orbit}
\end{figure}

Symmetric residuation does not act freely on configuration structures, \ie not all the group fixed points are trivial. Note, for instance, that the configuration structure $(E,\mc P(E))$ is invariant under the action of $\grpAct$.

\begin{example}
	\label{ex:fixed-point}
	The configuration structure $\Conf$ below remains invariant after symmetric residuation by $\set{b}$ (greyed out). Dashed arrows denote the function $\odot_{\set{b}}:y\mapsto \{b\} \Del y$.
	
	\noindent{

\centering

		\begin{tikzpicture}
			\node (empt1) {\(\emptyset\)};
			\node (a1) [above left  = 1.3em of empt1] {\(\{a\}\)};
			\node (b1) [above  = 1.3em of empt1] {\(\{b\}\)};
			\node (c1) [above right  = 1.3em of empt1] {\(\{c\}\)};
			\node(ab1) [above = .5cm of a1] {$\set{a,b}$};
			\node(bc1) [above = .5cm of c1] {$\set{b,c}$};
			\node(Conf1)[below  = 1.3em of empt1, node distance=0.55cm] {$\Conf$};
			\draw [su] (empt1) -- (a1);
			\draw [su] (empt1) -- (b1);
			\draw [su] (empt1) -- (c1);
			\draw [su] (a1) -- (ab1);
			\draw [su] (b1) -- (bc1);
			\draw [su] (c1) -- (bc1);
			\draw [su] (b1) -- (ab1);	
			
			\node (empt) [right  = 1.3em of empt1, xshift=3cm]{\(\emptyset\)};
			\node (a) [above left  = 1.3em of empt] {\(\{a\}\)};
			\node (b) [above  = 1.3em of empt] {\(\{b\}\)};
			\node (c) [above right  = 1.3em of empt] {\(\{c\}\)};
			\node(ab) [above = .5cm of a] {$\set{a,b}$};
			\node(bc) [above = .5cm of c] {$\set{b,c}$};
			\node(Conf)[below  = 1.3em of empt, node distance=0.55cm] {$\set{b}\grpAct\Conf$};
			
			\draw [su] (empt) -- (a);
			\draw [su] (empt) -- (b);
			\draw [su] (empt) -- (c);
			\draw [su] (a) -- (ab);
			\draw [su] (b) -- (bc);
			\draw [su] (c) -- (bc);
			\draw [su] (b) -- (ab);	
			
			\draw[dmap] (a1) to[out=330, in=235] (ab);
			\draw[dmap] (empt1) to[out=-20, in=235]  (b);
			\draw[dmap] (b1) to[out=35, in=190] (empt);
			\draw[dmap] (c1) to[out=45, in=140]  (bc);
			\draw[dmap] (ab1) to[out=30, in=130, looseness=1.15] (a);
			\draw[dmap] (bc1) to[out=30, in=45, looseness=1.3]  (c);
			\begin{scope}[on background layer]
				\node [fill=fillcolor, fit=(b1), rounded corners=.4cm, inner sep=4pt, draw=fillborder] {};
			\end{scope}	
		\end{tikzpicture}

}
\end{example}

\section{Pointed configuration structures}\label{sec:pointed}

Symmetric residuation yields the bare causal structure of reversible computations which is \enquote{direction insensible} in the sense that there is no way to tell whether an event is backward or forward in a computation. As a consequence, residuation by configurations that are neither enabling nor preventing any event from occurring are fixed points. This is, for instance, the case of configuration $\set{b}$ in Example~\ref{ex:fixed-point}.

Classical semantics of reversible computations distinguish backward and forward events~\cite{Graversen2018} or transitions~\cite{lmcs:12332}. In operational semantics in particular, it may be desirable to observe the direction of a transition to prioritize \enquote{forward} executions, so that backtracking occurs only when necessary~\cite{Lan_etal11}. Debugging concurrent executions also has a natural orientation, where backtracking is possible only during trace analysis~\cite{Lan_etal18}. Bisimulations for reversible concurrent processes usually require matching forward and backward transitions with transitions of the same direction~\cite{DeNMonVaa90,MonPis97,PhiUli12,Aubert2020b,Lanese2021,Aubert2025}. 

To comply with these semantics, we show here how symmetric residuation yields a natural notion of negative and positive configurations. Notice that adding a notion of direction of computation does not change the semantics we introduced so far, but rather annotates it with additional information. We do this by introducing \emph{pointed} configuration structures, which are just plain configuration structures equipped with a distinguished configuration--called its \emph{referential}--that points to the past.
A similar idea was used to draw an operational correspondence between trajectories in a configuration structure and a reversible operator algebra using labels~\cite{Aubert2020b}.	

\begin{definition}
\label{def:pointed-structure}
	A \emph{pointed configuration structure} \(\Conf[x^\dagger]\) is a pair $ \tuple{\Conf, x^{\dagger}}$ where $\Conf$ is a configuration structure and $x^\dagger\in\Conf$ is a finite configuration, which we call its \emph{referential}. A pointed configuration structure is \emph{initial} when $x^\dagger = \emptyset$.
\end{definition}

In the following, we simply write \(\Conf[x^{\dagger}] \in \mc C_E\) if \(\Conf \in \mc C_E\).

\begin{definition}
\label{def:pointed-residuation} 
	Let $\Conf[x^\dagger]\in\mc C_E$. For all finite $y\in\Conf$, we define
\(y\grpAct\Conf[x^\dagger] \coloneqq (y\grpAct\Conf)[y\Del x^\dagger]\).
\end{definition}

Symmetric residuation acts freely on pointed configuration structures, \ie the orbit of a pointed configuration structure has as many elements as the structure has configurations.

\begin{proposition}[Free action]\label{prop:free-action}
	Let $\Conf[x^\dagger]\in\mc C_E$. For all finite $y,z\in\Conf$, 
	\(y\grpAct\Conf[x^\dagger] = z\grpAct\Conf[x^\dagger]\) if and only if \(y = z\). 
\end{proposition}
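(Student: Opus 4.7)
The plan is to exploit the fact that equality of pointed configuration structures requires equality of both the underlying structure \emph{and} the referential, and to use the abelian group structure of \((\mc P(E), \Del)\) (Property~\ref{prop:abelian}) to cancel a common term. The right-to-left direction is immediate: if \(y = z\) then \(y \grpAct \Conf[x^\dagger]\) and \(z \grpAct \Conf[x^\dagger]\) are defined by the same expression, hence equal.

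For the left-to-right direction, I would unfold Definition~\ref{def:pointed-residuation} on both sides, obtaining
\[
y\grpAct\Conf[x^\dagger] = (y\grpAct\Conf)[y\Del x^\dagger] \qquad \text{and} \qquad z\grpAct\Conf[x^\dagger] = (z\grpAct\Conf)[z\Del x^\dagger].
\]
Since two pointed configuration structures are equal only when their referentials agree, the assumed equality forces \(y\Del x^\dagger = z\Del x^\dagger\). I would then invoke Property~\ref{prop:abelian}: as \((\mc P(E),\Del)\) is an abelian group with each element its own inverse, I can \(\Del\)-multiply both sides by \(x^\dagger\), yielding \(y \Del (x^\dagger \Del x^\dagger) = z \Del (x^\dagger \Del x^\dagger)\), i.e., \(y\Del \emptyset = z\Del \emptyset\), i.e., \(y = z\).

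There is no real obstacle here: the work was already done in setting up the group structure and in the very design of Definition~\ref{def:pointed-residuation}, which records the residuation \enquote{twist} \(y\Del x^\dagger\) in the referential. The pointed structure acts precisely as a witness disambiguating the action, which is exactly what was missing in the unpointed case (recall Example~\ref{ex:fixed-point}, where \(\set{b}\) is a nontrivial fixed point for \(\grpAct\) on the underlying \(\Conf\)). It is also worth noting that the proof never uses anything about the underlying configuration structure beyond what is already packaged into the definition of pointed residuation—freeness is a purely algebraic consequence of the group action on the referential component.
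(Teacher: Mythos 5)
Your proof is correct and follows essentially the same route as the paper: both read off \(y\Del x^\dagger = z\Del x^\dagger\) from equality of referentials and then cancel \(x^\dagger\), the paper phrasing this as injectivity of \(\odot_{x^\dagger}\) while you cancel explicitly via the abelian group structure of \((\mc P(E),\Del)\).
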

\begin{proof}
	The only if direction is trivial. For the if part, it suffices to remark that the referential in $y\grpAct\Conf[x^\dagger]$ must be equal to the referential in $z\grpAct\Conf[x^\dagger]$ and thus, $y\Del x^\dagger = z\Del x^\dagger$. Since $\odot_x:y\mapsto x\Del y$ is injective we can conclude that $y=z$. 
\end{proof}

\begin{definition}
\label{def:orientation}
	Let $\Conf[x^\dagger]\in\mc C_E$. For all $a\in E$, say that $a$ is \emph{negative in $\Conf[x^\dagger]$} when $a\in x^\dagger$. It is \emph{positive in $\Conf[x^\dagger]$} otherwise.
	A configuration is \emph{backward} (\resp \emph{forward}) if all its events are \emph{negative} (\resp \emph{forward}). 
\end{definition}

Observe that referentials are finite configurations, and thus only finite configurations may be purely backwards. This echoes the \emph{well-foundedness} axiom~\cite[Definition 3.1]{LPU24} requiring reversible systems to have a finite past.

\begin{example}\label{ex:pointed-structure}
	Once equipped with a referential, the configuration structure of Example~\ref{ex:fixed-point} is no longer a fixed point, as illustrated in \autoref{fig:pointed-structure}, where negative signs denote negative events. 
\end{example}

\begin{figure}
	\noindent\centering
\begin{tikzpicture}
	\node (empt1) {\(\emptyset^\dagger\)};
	\node (a1) [above left = 1.3em of empt1] {\(\{a\}\)};
	\node (b1) [above = 1.3em of empt1] {\(\{b\}\)};
	\node (c1) [above right = 1.3em of empt1] {\(\{c\}\)};
	\node(ab1) [above = 1.3em of a1] {$\set{a,b}$};
	\node(bc1) [above = 1.3em of c1] {$\set{b,c}$};
	\node(Conf1)[below of = empt1, node distance=0.55cm] {$\Conf[\emptyset^\dagger]$};
	\draw [su] (empt1) -- (a1);
	\draw [su] (empt1) -- (b1);
	\draw [su] (empt1) -- (c1);
	\draw [su] (a1) -- (ab1);
	\draw [su] (b1) -- (bc1);
	\draw [su] (c1) -- (bc1);
	\draw [su] (b1) -- (ab1);	
	
	\node (empt) [right = 1.3em of empt1, xshift=3cm]{\(\emptyset\)};
	\node (a) [above left = 1.3em of empt] {\(\{a\}\)};
	\node (b) [above = 1.3em of empt] {\(\{b^-\}^\dagger\)};
	\node (c) [above right = 1.3em of empt] {\(\{c\}\)};
	\node(ab) [above = 1.3em of a] {$\set{a,b^-}$};
	\node(bc) [above = 1.3em of c] {$\set{b^-,c}$};
	\node(Conf)[below of = empt, node distance=0.55cm] {$\set{b} \grpAct \Conf[\emptyset^\dagger] = (\set{b}\grpAct\Conf)[\set{b}^\dagger]$};
	
	\draw [su] (empt) -- (a);
	\draw [su] (empt) -- (b);
	\draw [su] (empt) -- (c);
	\draw [su] (a) -- (ab);
	\draw [su] (b) -- (bc);
	\draw [su] (c) -- (bc);
	\draw [su] (b) -- (ab);	
	
	\begin{scope}[on background layer]
		\node [fill=fillcolor, fit=(b1), rounded corners=.4cm, inner sep=4pt, draw=fillborder] {};
	\end{scope}	
	
\end{tikzpicture}
	\caption{Symmetric residuation applied to a pointed configuration structure. The negative signs label the events (here $b$) that belong to the referential configuration.}\label{fig:pointed-structure}
\end{figure}
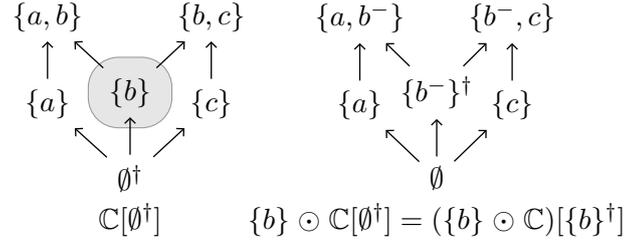

The referential of a pointed configuration structure provides a path back to its origin, which can be retrieved by performing the residuation $x^\dagger\grpAct \Conf[x^\dagger]$. In addition, Proposition~\ref{prop:free-action} guarantees that this is the only way to go back to the initial structure.

\section{Stable orbits and prime event structures}\label{sec:stable-orbit}
Configuration structures are \emph{extensional} models for concurrency, which means that the partial order of configurations is \emph{a priori} not reducible to relations on events themselves. For instance, in Example~\ref{ex:configuration-structures}, the fact that $\set{a,b}\ncomp_{\Conf_0}\set{a,c}$ is not a consequence of event $b$ being in conflict with event $c$ since $\set{b,c}$ is indeed a reachable configuration of $\Conf_0$. Prime event structures, which we introduce now, are models of concurrent systems whose set of admissible configurations is precisely engendered by a causality and a conflict relation on events. In particular, algebraic process calculi in the style of CCS~\cite{Mil89}, which are commonly used to model concurrent programs, can be interpreted as \emph{prime event structures}~\cite{Win82,NiePloWin81,BouCas89,GlaPlo09}. 

\begin{definition}
\label{def:PES}
	A (prime) event structure is a tuple $\ES=\tuple{E,<,\cf}$ where ${}<{}\subseteq E\times E$ is 
	a partial order (\eg transitive, anti-symmetric and reflexive) called the \emph{causality} relation and ${}\cf{}\subseteq E\times E$ an irreflexive symmetric \emph{conflict} relation satisfying:
	\begin{gather} 
		e \cf e' < e'' \imp e \cf e'' \tag{Principle of conflict heredity}\label{eq:pch}
	\end{gather}
	The \emph{configurations} of $\ES$, written $\mathit{Conf}(\ES)$, are the subsets of $E$ that are conflict-free and downward closed for the causality relation. Two prime event structures $(E,<,\cf)$ and $(E',<',\cf')$ are \emph{isomorphic} if there is a bijection $\phi:E\to E'$ that preserves causality and conflict, \ie $e<e'$ if and only if $\phi(e)<'\phi(e')$ and $e\cf e'$ if and only if $\phi(e)\cf'\phi(e')$.
\end{definition}

%
	%
			%
	
	A well-known result by Winskel~\cite{Win82} is that the configurations of a prime event structure coincide with \emph{stable} configuration structures.
	\begin{definition}
\label{def:stable}
		Let $\Conf\in\mc C_E$ be a configuration structure. Say that $\Conf$ is \emph{stable} if it is:
		\begin{description}
			\item[Rooted:] $\nil\in \Conf$
			\item[Connected:] for all $x\in \Conf$, $x\neq\nil\imp\exists a\in x:x\bs\set{a}\in\Conf$
			\item[Closed under bounded union:] for all $x,y,z\in\Conf$, $x\cup y\subseteq z \imp x\cup y\in \Conf$
			\item[Closed under intersection:] for all $x,y\in \Conf$, $x\cap y\in\Conf$
			\item[Coherent:] for all $x,y,z \in\Conf$, if there exists $z', z'', z'''\in \Conf$ such that $x\cup y\subseteq z'$, $y\cup z\subseteq z''$ and $x\cup z\subseteq z'''$, then $x\cup y\cup z\in \Conf$.
		\end{description}	
	\end{definition}
	
	For instance, in Example~\ref{ex:configuration-structures}, none of the configurations are stable:  $\Conf_0$ is missing $\set{a,b,c}$ to be coherent, $\Conf_1$ is missing $\set{c}$ to be closed under intersection, and $\Conf_2$ is missing $\set{a,b}$ to be closed under bounded union. As a consequence there are no prime event structures whose set of configurations is isomorphic to any of these structures, as indicated by the following theorem:
	
	\begin{theorem}[Winskel~\cite{Win82}]\label{thm:ConfOfES}
		Let $\ES=\tuple{E,<,\cf}$ be a prime event structure. The configuration structure $\mathsf{C}(\ES)\coloneqq \tuple{E,\mathit{Conf}(\ES)}$ is stable (see an illustration in \autoref{fig:es-cs}).
	\end{theorem}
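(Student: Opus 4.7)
The plan is to verify the five clauses of Definition \ref{def:stable} directly from the characterization of $\mathit{Conf}(\ES)$ as the family of conflict-free, causally downward-closed subsets of $E$. Most clauses reduce to routine bookkeeping with these two closure conditions, so the heart of the argument is the coherence property.

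Rootedness is immediate since $\emptyset$ is vacuously conflict-free and downward closed. Closure under intersection is equally direct: $x \cap y$ inherits conflict-freeness from either $x$ or $y$, and is downward closed because $e' < e \in x \cap y$ places $e'$ in both $x$ and $y$. For closure under bounded union with $x \cup y \subseteq z \in \mathit{Conf}(\ES)$, the union is conflict-free as a subset of $z$, and downward closed because any cause of an element of $x \cup y$ lies in whichever of $x$ or $y$ already contains that element. For connectedness of a nonempty finite $x$, I would remove a $<$-maximal element $a$ of $x$: no remaining element has $a$ among its strict causes, so downward closure of $x\bs\set{a}$ is preserved, and conflict-freeness is automatic.

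The main step is coherence. Given configurations $x, y, z$ whose pairwise unions are contained in configurations $z', z'', z'''$ respectively, I need $x \cup y \cup z$ to be a configuration. Downward closure follows clause-by-clause from the downward closure of $x$, $y$ and $z$. The key observation is conflict-freeness: any conflict $e \cf e'$ between two elements of $x \cup y \cup z$ would place both elements within one of the pairwise unions $x \cup y$, $y \cup z$ or $x \cup z$, which are contained in the conflict-free configurations $z'$, $z''$ or $z'''$ --- a contradiction. I expect this clause to be the most notation-heavy, but the argument itself is straightforward; the only delicacy is the case split on which two of $x, y, z$ the conflicting pair falls into. Notably, the \eqref{eq:pch} principle is not invoked directly here, since it has already done its work in ensuring that the downward closure of a conflict-free set remains conflict-free when assembling the witnessing configurations.
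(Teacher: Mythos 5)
The paper never proves this statement itself: it is imported wholesale from Winskel~\cite{Win82}, as part of the cited equivalence between prime event structures and stable configuration structures, so there is no in-paper argument to compare yours against. Your direct verification of the five clauses of Definition~\ref{def:stable} is correct and is the standard one: rootedness, closure under intersection, closure under bounded union and coherence all come down to the facts that conflict-freeness is inherited by subsets (the bounding configurations $z$, $z'$, $z''$, $z'''$ supply conflict-freeness of the relevant unions) and that downward closure is preserved by unions and intersections; your case split in the coherence clause is exactly the right bookkeeping, and you are also right that \autoref{eq:pch} is not needed in this direction of the correspondence (it matters for $\mathsf{E}$, not for $\mathsf{C}$). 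One caveat: in the connectedness clause you silently restrict to \emph{finite} configurations, whereas Definition~\ref{def:stable} as written quantifies over all $x\in\Conf$. For an infinite configuration with no $<$-maximal event (\eg the configuration $\set{e_1,e_2,\dots}$ generated by an infinite ascending chain $e_1<e_2<\cdots$ with empty conflict), removing any event breaks downward closure, so the clause as literally stated fails; this is an imprecision in the paper's rendering of Winskel's definitions (which either restrict connectedness to finite configurations or impose a finite-causes axiom) rather than a flaw in your argument, but your proof, and the theorem, do require that reading.
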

	
	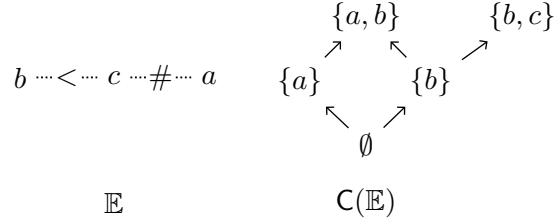
\begin{figure}
		\begin{center}	
	\begin{tikzpicture}
		\node(c) {\(c\)};
		\node(b) [left = .8cm of c]{\(b\)};
		\node(a) [right = .8cm of c]{\(a\)};
		\draw[cf] (c) -- (a);
		\draw[ord] (b) -- (c);

		\node (empt) [below right = 1em and 7.5em of c]{\(\emptyset\)};
		\node (a2) [above left = .3cm of empt] {\(\{a\}\)};
		\node (b2) [above right  = .3cm of empt] {\(\{b\}\)};
		\node (bc2) [above right = .3cm of b2] {\(\{b, c\}\)};
		\node (ab2) at (bc2 -| empt) {\(\{a, b\}\)};
		\node (caption2) [below = .4em of empt] {\(\mathsf{C}(\ES)\)};

		\node (caption1) at (caption2 -| c) {\(\ES\)};
		\draw [su] (empt) -- (a2);
		\draw [su] (empt) -- (b2);
		\draw [su] (a2) -- (ab2);
		\draw [su] (b2) -- (ab2);
		\draw [su] (b2) -- (bc2);
		
	\end{tikzpicture}
\end{center}
		\caption{An event structure and the set of its admissible configurations, forming a configuration structure.}\label{fig:es-cs}
	\end{figure}
	
	%
	%
	
	In the following, we let $\mc S_E\subseteq\mc C_E$ denote stable configuration structures over $E$. Stable configuration structures are, in fact, concrete representations of \emph{prime algebraic domains}, which we now introduce and which play an important role in connecting sets of configurations to event structures.
	
	\begin{definition}
\label{def:prime}
		Let $\po=(X,\sqsubseteq)$ be a partial order.
		A set $Y\subseteq X$ is:
		\begin{description}
			\item[Consistent] when $\Sup^\po Y$ exists.
			\item[Pairwise consistent] if \(Y \neq \emptyset\) and all its pairs of distinct elements form a consistent set.
			\item[Directed] when $Y\neq\emptyset$ and $\Sup^\po Y\in Y$.
		\end{description} 
		An element $x\in X$ is:
		\begin{description}
			\item[Compact] if for every directed $Y\subseteq X$ such that $x\sqsubseteq \bigsqcup^\po Y$, there exists $y\in Y$ such that $x\sqsubseteq y$. 
			\item[Complete prime] if for every pairwise consistent $Y\subseteq X$ such that  $x\sqsubseteq \bigsqcup^\po Y$, there exists $y\in Y$ such that $x\sqsubseteq y$. 
		\end{description} 
		Let $\K(\po)\subseteq X$ denote the \emph{compact elements} of $\po$ and $\pr(\po)\subseteq\K(\po)$ the (complete) prime elements of $\po$. We use $p,q,\dots$ instead of $x,y,\dots$ to highlight prime elements. Let $\dec{x}_\po\coloneqq \down_\po\set{x}\cap\pr(\po)$ denote the complete prime elements that are below $x$ in $\po$. 
	\end{definition}
	
	\begin{definition}
\label{def:prime-alg}
		Let $\po=(X,\sqsubseteq)$ be a partial order. It is:
		\begin{description}
			\item[Finitary] whenever for all $x\in \K(\po)$, $\down_\po\set{x}$ is finite. 	
			\item[Coherent] whenever for all pairwise consistent set $Y\subseteq X$, $\bigsqcup^\po Y$ exists.
			\item[Prime algebraic]  if for all \(x \in \po\), \(x = {\bigsqcup}^\po \dec{x}_\po \).
		\end{description}
		We call \emph{(prime algebraic) domains}, the partial orders that are finitary, coherent and prime algebraic. 
	\end{definition}
	
	\begin{theorem}[Winskel~\cite{Win82}]
		If a configuration structure $\Conf=(E,X)$ is stable then $\po=(X,\subseteq)$ is a domain of configurations.
	\end{theorem}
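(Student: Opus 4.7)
My plan follows Winskel's classical correspondence, organised around the three defining clauses of a prime algebraic domain.

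\textbf{Directed closure and compactness (the main obstacle).} A preliminary step, and where I expect the chief obstacle, is to establish that $X$ is closed under directed unions: for any directed $Y \subseteq X$, $\bigcup Y \in X$. This is not one of the stability axioms but follows from them by approximating any candidate limit with its finite sub-configurations, using connectedness to build these up one event at a time and closure under bounded union to fit them together. With directed closure in hand, the compact elements of $(X,\subseteq)$ are precisely the finite configurations: for finite $x$ and directed $Y$ with $x \subseteq \Sup^\po Y = \bigcup Y$, each $e \in x$ sits in some $y_e \in Y$, and directedness collects the finitely many $y_e$ into a single $y \in Y$ containing $x$. Conversely, every configuration is the directed join of its finite sub-configurations, so compactness forces finiteness. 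Finitarity is then immediate: $\down_\po\set{x} \subseteq \mc P(x)$ is finite for finite $x$.

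\textbf{Coherence.} Given a pairwise consistent $Y \subseteq X$, the coherence axiom of $\Conf$ lifts by induction on cardinality --- at each step applied to a triple $\set{y_k, x_{k+1}, x_i}$ whose pairwise unions are already known to lie in $X$ --- to show that every finite pairwise consistent $F \subseteq X$ has $\bigcup F \in X$. For arbitrary pairwise consistent $Y$, the directed family of these finite unions has set-theoretic union $\bigcup Y$, which lies in $X$ by the directed-closure lemma above and is therefore the $\Sup^\po$ of $Y$ in $\po$.

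\textbf{Prime algebraicity.} For each event $e$ appearing in some configuration, closure under intersection makes $[e] \coloneqq \bigcap\set{x \in X : e \in x}$ a well-defined configuration, namely the $\subseteq$-minimum configuration containing $e$. Using coherence of $\po$, the join of any pairwise consistent family equals its set-theoretic union, so $[e] \subseteq \Sup^\po Y$ forces $e \in y$ for some $y \in Y$, hence $[e] \subseteq y$ by minimality; each $[e]$ is therefore a complete prime of $\po$. Any $x \in X$ then satisfies $x = \bigcup_{e \in x} [e] = \Sup^\po \dec{x}_\po$: the inclusion $\bigcup_{e\in x}[e] \subseteq x$ holds because $[e] \subseteq x$ for each $e \in x$ (by definition of $[e]$), while the reverse follows from $e \in [e]$. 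This yields prime algebraicity and completes the proof.
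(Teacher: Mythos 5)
The paper does not prove this statement (it is imported from Winskel~\cite{Win82} with a citation only), so your argument can only be judged on its own terms; it follows the standard Winskel route, but it has a genuine gap exactly at the step you yourself flag as the main obstacle. The preliminary lemma that $X$ is closed under directed unions does not follow from the five axioms of Definition~\ref{def:stable}, and your sketch cannot be repaired as stated: closure under bounded union only applies when the union is already bounded by a configuration known to lie in $X$, and for a directed family that bound is precisely the limit whose membership is in question, so the argument is circular; connectedness, for its part, only removes a single event from a given configuration and does not provide finite subconfigurations approximating an infinite one. In fact the lemma is not derivable from the stated axioms: take $E=\mathbb{N}$ and $X=\set{\emptyset}\cup\set{\set{0,\dots,n}\mid n\in\mathbb{N}}$. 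This structure is rooted, connected, closed under bounded union and under intersection, and coherent in the sense of Definition~\ref{def:stable} (all configurations are pairwise comparable), yet the chain of all its configurations has union $\mathbb{N}\notin X$; moreover that same chain is pairwise consistent and has no least upper bound in $(X,\subseteq)$, so the resulting poset is not even coherent in the sense of Definition~\ref{def:prime-alg}. The missing ingredient is therefore an additional hypothesis rather than a cleverer derivation: Winskel's (and van Glabbeek--Plotkin's~\cite{GlaPlo09}) axiomatisations include a completeness/finiteness axiom (closure under directed unions, or the requirement that every event of a configuration lies in a finite subconfiguration), which is implicitly part of the setting the paper cites.

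The gap propagates into your later steps: the coherence argument for infinite pairwise consistent families needs joins to be set-theoretic unions, which is exactly directed closure again, and the prime-algebraicity step needs $[e]=\bigcap\set{x\in X\mid e\in x}$ to be a configuration, which binary intersection closure only guarantees when $e$ occurs in some finite configuration. Once the directed-completeness/finiteness hypothesis is added, the rest of your outline is sound and is the classical argument: finite pairwise consistent unions by iterating the ternary coherence axiom (your induction step is correct), compact elements identified with finite configurations, $[e]$ as the complete primes, and $x=\Sup^\po\dec{x}_\po$ by the two inclusions you give.
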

	
	\begin{theorem}[Winskel~\cite{Win82}]\label{thm:ESofConf}
		Let $\Conf\in\mc S_E$. Then  
		\(\mathsf{E}(\Conf)\coloneqq\tuple{\pr(\Conf),<,\cf}\)
		where $p<q$ if $p\subseteq q$ and $p\cf q$ if $p\ncomp_\Conf q$ is a prime event structure.
	\end{theorem}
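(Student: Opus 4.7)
The plan is to verify that $\mathsf{E}(\Conf) = \tuple{\pr(\Conf), <, \cf}$ satisfies the three defining properties of a prime event structure (Definition~\ref{def:PES}): that $<$ is a partial order on $\pr(\Conf)$, that $\cf$ is an irreflexive symmetric relation, and that the principle of conflict heredity holds.

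The first two points are bookkeeping. Since $<$ is simply $\subseteq$ restricted to $\pr(\Conf)$, and set inclusion is always a partial order, reflexivity, transitivity and anti-symmetry come for free. Symmetry of $\cf$ is inherited directly from the symmetry of $\comp_\Conf$. Irreflexivity is equally short: since $\Sup^\Conf\set{p} = p \in \Conf$, one always has $p \comp_\Conf p$, hence $\neg(p \cf p)$.

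The heart of the proof is conflict heredity: I need to show that $p \cf q$ and $q < r$ imply $p \cf r$. I would argue by contraposition, assuming $p \comp_\Conf r$ together with $q \subseteq r$, and derive $p \comp_\Conf q$. From $p \comp_\Conf r$ the supremum $\Sup^\Conf\set{p, r}$ exists, so some $z \in \Conf$ satisfies $p \cup r \subseteq z$. Closure of $\Conf$ under bounded union, one of the stability axioms of Definition~\ref{def:stable}, then yields $p \cup r \in \Conf$. Since $q \subseteq r$, we get $p \cup q \subseteq p \cup r \in \Conf$, and a second invocation of bounded-union closure gives $p \cup q \in \Conf$. This configuration is the least upper bound of $\set{p, q}$, witnessing $p \comp_\Conf q$ as required.

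The only subtle point is the alignment between the order-theoretic supremum and the set-theoretic union, but this is precisely what the bounded-union axiom delivers. No use is made of primality of the elements involved; it plays no role in satisfying the three axioms themselves, and would only enter if one further needed the construction $\mathsf{E}$ to be inverse (up to isomorphism) to the construction $\mathsf{C}$ of Theorem~\ref{thm:ConfOfES}.
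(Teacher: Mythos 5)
Your proof is correct. The paper itself gives no argument for this statement (it is imported from Winskel as a cited result), so the only thing to check is whether your verification of the axioms in Definition~\ref{def:PES} goes through, and it does: restriction of \(\subseteq\) to \(\pr(\Conf)\) is trivially a partial order, irreflexivity of \(\cf\) follows since \(\Sup^\Conf\set{p}=p\) always exists, and your contrapositive argument for conflict heredity is sound --- from \(p\comp_\Conf r\) you get an upper bound \(z\supseteq p\cup r\), bounded-union closure gives \(p\cup r\in\Conf\), a second application gives \(p\cup q\in\Conf\), and this union is indeed \(\Sup^\Conf\set{p,q}\) because any configuration containing both \(p\) and \(q\) contains their union. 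Your closing remark is also accurate: primality of the elements is irrelevant for these three axioms and only becomes essential for the deeper part of Winskel's correspondence, namely that \(\mathsf{C}(\mathsf{E}(\Conf))\) recovers \(\Conf\) up to equivalence, which is not part of the statement as posed.
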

	
	A fundamental result is that combining the functors $\mathsf{C}$ and $\mathsf{E}$ of Theorems~\ref{thm:ConfOfES} and \ref{thm:ESofConf} actually defines an adjunction, which is also an equivalence between stable configurations and prime event structures~\cite{Win82}: for all stable $\Conf$ and all prime event structure $\ES$: $\mathsf{C}(\mathsf{E}(\Conf)) \sim \Conf$ and $\mathsf{E}(\mathsf{C}(\ES))\sim\ES$. We can utilize this equivalence to transport symmetric residuation to prime event structures, but in order to do this, we must first verify that symmetric residuation preserves stability, \ie the properties of Definition~\ref{def:stable}. Observe that the orbit presented in Example~\ref{ex:orbit} does not preserve prime algebraicity since the topmost configuration structure is prime algebraic (but not coherent), when the other elements of the orbit are not.
	
	\begin{restatable}[Stable orbits]{theorem}{stableorbit}\label{thm:stable}
		The group action $\grpAct$ preserves stability, and is therefore a group action on stable configuration structures.
	\end{restatable}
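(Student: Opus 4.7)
The plan is to verify each of the five clauses of \autoref{def:stable} in turn for $x \grpAct \Conf$, systematically exploiting the bijection $\phi_x \colon y \mapsto y \Del x$ between the configurations of $\Conf$ and those of $x \grpAct \Conf$. The subtlety throughout is that $\phi_x$ does not preserve inclusion---it \emph{toggles} the events of $x$---so every axiom phrased in terms of $\subseteq$ must be translated into a mixed condition on the preimages $y_i \coloneqq z_i \Del x$ in $\Conf$. Rootedness is immediate from $\emptyset = x \Del x \in x \grpAct \Conf$.

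For \emph{connectedness}, we take a nonempty $z = y \Del x \in x \grpAct \Conf$, so $y \neq x$ and hence at least one of $y \bs x$ and $x \bs y$ is nonempty. The auxiliary lemma to invoke---a standard consequence of stability, since each interval $[u,v]$ of configurations is a finite distributive lattice---is that for $u \subsetneq v$ in $\Conf$ there exists $a \in v \bs u$ with $v \bs \set{a} \in \Conf$ (a coatom), and dually there exists $a' \in v \bs u$ with $u \cup \set{a'} \in \Conf$ (an atom). Applied to $y \cap x \subsetneq y$ in the first subcase this supplies $a \in y \bs x$ with $y \bs \set{a} \in \Conf$; applied to $y \subsetneq x$ in the second it supplies $a \in x \bs y$ with $y \cup \set{a} \in \Conf$. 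In either subcase $y \Del \set{a} \in \Conf$, which is precisely the condition $z \bs \set{a} \in x \grpAct \Conf$.

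Intersection and bounded union both reduce to a single use of triple-coherence in $\Conf$. A direct group-theoretic computation identifies the witnesses in $\Conf$:
$(z_1 \cap z_2) \Del x = (x \cap y_1) \cup (x \cap y_2) \cup (y_1 \cap y_2)$,
and, under the hypothesis $z_1 \cup z_2 \subseteq z_3$ (which translates into $y_3 \cap x \subseteq y_1 \cap y_2$ together with $(y_1 \cup y_2) \bs x \subseteq y_3$),
$(z_1 \cup z_2) \Del x = (y_1 \cap y_3) \cup (y_2 \cap y_3) \cup (x \cap y_1 \cap y_2)$.
In each case the three displayed pieces lie in $\Conf$ by intersection closure, and their pairwise unions are bounded in $\Conf$ (by $x, y_1, y_2$ in the first case and by $y_3, y_1, y_2$ in the second), so the coherence of $\Conf$ delivers the triple union.

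The hard part will be \emph{coherence} itself. Given $z_1, z_2, z_3 \in x \grpAct \Conf$ with pairwise bounds $z_k' = y_k' \Del x$, the required witness in $\Conf$ for $z_1 \cup z_2 \cup z_3$ computes to $y^* \coloneqq ((y_1 \cup y_2 \cup y_3) \bs x) \cup (x \cap y_1 \cap y_2 \cap y_3)$, and we do not see a clean one-step derivation of $y^* \in \Conf$ from the triple-coherence of $\Conf$ applied to three intrinsically simple pieces. The plan is instead to cross over to prime event structures via the equivalence $\Conf \sim \mathsf{C}(\mathsf{E}(\Conf))$ granted by \autoref{thm:ConfOfES}--\autoref{thm:ESofConf}, and check directly that $y^*$ is conflict-free and causally downward-closed in $\mathsf{E}(\Conf)$. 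Conflict-freeness is routine: any two events of $y^*$ lie together inside some $y_i$ or some $y_k'$. The only delicate case of downward closure is $a \leq b$ with $b \in y_i \bs x$ and $a \in x$: then $b \in (y_i \cup y_j) \bs x \subseteq y_k'$ for the bound $y_k'$ of the pair $\set{i,j}$, downward closure inside $y_k'$ places $a \in y_k'$, and the inclusion $y_k' \cap x \subseteq y_i \cap y_j$ forces $a \in y_j$; running $j$ over the remaining two indices places $a$ in $x \cap y_1 \cap y_2 \cap y_3 \subseteq y^*$, as required.
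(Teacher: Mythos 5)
Your plan is sound and most of it checks out in detail: the translations through the involution $y\mapsto y\Del x$ are accurate (the identities $(z_1\cap z_2)\Del x=(x\cap y_1)\cup(x\cap y_2)\cup(y_1\cap y_2)$ and, under $z_1\cup z_2\subseteq z_3$, $(z_1\cup z_2)\Del x=(y_1\cap y_3)\cup(y_2\cap y_3)\cup(x\cap y_1\cap y_2)$ are correct, and the bounds $x,y_1,y_2$, respectively $y_3,y_1,y_2$, do make the coherence axiom of $\Conf$ applicable), and your event-structure argument for the coherence clause goes through: conflict-freeness via the common configurations $y_i$ or $y_k'$, and the delicate downward-closure case via $x\cap y_k'\subseteq y_i\cap y_j$, using only facts the paper supplies (\autoref{prop:prime-prop}, \autoref{prop:unicity}, and the equivalence $\mathsf{C}(\mathsf{E}(\Conf))\sim\Conf$), with no circularity since the representation is applied to $\Conf$, not to $x\grpAct\Conf$.

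There are two soft spots. First, the connectedness step rests on an atom/coatom interpolation lemma that you assert as standard, justified by the claim that each interval $[u,v]$ of configurations is a finite distributive lattice; that justification is wrong as stated, since $v=y$ may be infinite and then so is $[y\cap x,\,y]$. The lemma you need is nevertheless true, but what makes it work is the finiteness of $x$ (hence of $y\cap x$ and $x\bs y$), and it should be proved from the axioms: for $u\subseteq v$ in $\Conf$ with $u$ finite and $v\bs u\neq\emptyset$, apply connectedness repeatedly; whenever the removable event lies in $u$, delete it and continue (the result stays in $\Conf$), and as soon as a removable event $a\notin u$ appears, add the deleted events back using $u\in\Conf$ and closure under bounded union (everything is bounded by $v$) to conclude $v\bs\set{a}\in\Conf$; after at most $|u|$ deletions, connectedness applied to $v\bs u$ forces such an $a$. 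The dual atom statement for $y\subsetneq x$ then follows by induction on $|x\bs y|$. Without something like this the connectedness clause is unsupported. Second, the detour through $\mathsf{E}(\Conf)$ for coherence is legitimate given the paper's stated adjunction, but it is avoidable, which keeps the whole proof at the level of the axioms of \autoref{def:stable}: having already established closure under bounded union for $x\grpAct\Conf$, you may replace each pairwise bound by the union itself, so that $B_{ij}\coloneqq(z_i\cup z_j)\Del x=(x\cap y_i\cap y_j)\cup((y_i\cup y_j)\bs x)\in\Conf$; then $D_i\coloneqq y_i\cap B_{ij}\cap B_{ik}=(x\cap y_1\cap y_2\cap y_3)\cup(y_i\bs x)\in\Conf$ by intersection closure, $D_i\cup D_j\subseteq B_{ij}$, and coherence of $\Conf$ applied to $D_1,D_2,D_3$ yields exactly your $y^{*}=D_1\cup D_2\cup D_3$.
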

	
	This property serves as a cornerstone result, enabling the definition of an event-structure semantics for reversible computations.
	This result, combined with the adjunction that makes prime event structures equivalent to stable configuration structures, allows one to conclude that symmetric residuation operates on prime event structures as well. \autoref{sec:prime-preserving} is devoted to tracking the effect of symmetric residuation on prime elements. In the meantime, we show that Winskel's construction on pointed stable configuration structures yields prime event structures that are equipped with a polarity map attributing a sign to events.
	
	\section{Polarized event structures}
	
	The construction of \autoref{thm:ESofConf} maps configuration structures to prime event structures using their complete prime elements. We extend this construction to build \emph{polarized event structures} out of pointed configuration structures. Polarized event structures are just plain prime event structures, in which events have a sign which characterizes whether they have a forward or backward contribution to the configurations.
	
	\begin{definition}
 \label{def:polarized-event-structure}
		A \emph{polarized event structure} $\ES[\pi]$ is a pair \(\tuple{\ES, \pi}\) where \(\ES =(E,<,\cf)\) is an event structure and $\pi:E\to\set{-1,1}$, the \emph{polarity of $\ES[\pi]$}, satisfies 
		\(\mathit{Neg}(\ES[\pi])\coloneqq\set{e\in E\mid \pi(e)<0}\in\mathit{Conf}(\ES)\). 
		Two polarized event structures $\ES[\pi]$ and $\ES'[\pi']$ are isomorphic if $\ES$ and $\ES'$ are isomorphic, and the induced bijection $\phi:E\to E'$ satisfies \(\pi(e) = \pi'(\phi(e))\).
	\end{definition}
	
	That \(\mathit{Neg}(\ES[\pi])\) is a configuration of \(\ES[\pi]\) guarantees that past events have indeed been produced by a forward computation, a standard requirement in reversible semantics~\cite[Definition 3.1]{LPU24}.
	
	\begin{proposition}\label{prop:polarized-winksel}
		Let $\Conf[x^\dagger]$ be a pointed stable configuration structure and define $\mathsf{E}(\Conf[x^\dagger]):=\mathsf{E}(\Conf)[\pi_{x^\dagger}]$ where for all $p\in\pr(\Conf)$, $\pi_{x^\dagger}(p)<0$ if $p\subseteq x^\dagger$ and $\pi_{x^\dagger}(p)>0$ otherwise. Then $\mathsf{E}(\Conf[x^\dagger])$ is a polarized event structure.
	\end{proposition}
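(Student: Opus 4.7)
The plan is to check that $\mathsf{E}(\Conf[x^\dagger])$ satisfies the two requirements of Definition~\ref{def:polarized-event-structure}: that its underlying structure is a prime event structure, and that the set $\mathit{Neg}(\mathsf{E}(\Conf[x^\dagger]))$ is a configuration of $\mathsf{E}(\Conf)$. The first is immediate: since $\Conf$ is assumed stable, Theorem~\ref{thm:ESofConf} already gives that $\mathsf{E}(\Conf) = \tuple{\pr(\Conf), <, \cf}$ is a prime event structure, so nothing new has to be verified on that side.

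For the polarity condition, I would first rewrite $\mathit{Neg}(\mathsf{E}(\Conf[x^\dagger]))$ by unfolding definitions: it is exactly $\set{p \in \pr(\Conf) \mid p \subseteq x^\dagger}$, which is the set $\dec{x^\dagger}_\Conf$ of primes dominated by the referential. This identification is the conceptual step: primes below the referential are precisely the negative events. From here, the verification reduces to showing that $\dec{x^\dagger}_\Conf$ belongs to $\mathit{Conf}(\mathsf{E}(\Conf))$, i.e., that it is downward-closed for $<$ and conflict-free for $\cf$ in the sense of Theorem~\ref{thm:ESofConf}.

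Downward closure is immediate by transitivity of $\subseteq$: if $p \in \dec{x^\dagger}_\Conf$ and $q \in \pr(\Conf)$ satisfies $q < p$ (that is, $q \subseteq p$), then $q \subseteq p \subseteq x^\dagger$, so $q \in \dec{x^\dagger}_\Conf$. Conflict-freeness uses the referential as a joint upper bound: for any $p, q \in \dec{x^\dagger}_\Conf$, both are subsets of $x^\dagger \in \Conf$, so $\Sup^\Conf\set{p,q}$ exists (bounded above by $x^\dagger$, and $\Conf$ is closed under bounded union by stability), hence $p \comp_\Conf q$. By the definition of $\cf$ in $\mathsf{E}(\Conf)$, this means $\neg(p \cf q)$.

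Neither step is a real obstacle; the only point worth being careful about is that the two uses of the symbol $<$ must not be conflated (the causality in $\mathsf{E}(\Conf)$ is subset inclusion on primes, not the ambient order on events), and that the finiteness assumption on $x^\dagger$ in Definition~\ref{def:pointed-structure} is what makes the resulting negative configuration admissible as the past of an already-run forward computation, consistently with the well-foundedness interpretation discussed after Definition~\ref{def:orientation}.
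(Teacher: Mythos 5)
Your proposal is correct and follows essentially the same route as the paper's proof: identify $\mathit{Neg}(\mathsf{E}(\Conf[x^\dagger]))$ with the primes below $x^\dagger$, observe that $x^\dagger$ is a common upper bound (so, by stability, no two such primes are in conflict), and note downward closure under $\subseteq$. Your version merely spells out the use of closure under bounded union that the paper leaves implicit, which is a fine level of detail.
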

	\begin{proof}
		Suffices to show that $N\coloneqq\mathit{Neg}(\mathsf{E}(\Conf)[\pi_{x^\dagger}])$ is a configuration of $\mathsf{E}(\Conf)$. We have $N=\set{p\in\pr(\Conf)\mid p\subseteq x^\dagger}$. Thus all the events of $N$ have an upper bound in $\Conf$ and are therefore not in conflict. $N$ is also downwards $<$-closed by construction and is therefore a configuration of $\mathsf{E}(\Conf)$. 
	\end{proof}
	
	\section{Mapping prime elements through symmetric residuation}\label{sec:prime-preserving}
	
	Symmetric residuation describes how a new configuration structure $x\grpAct\Conf$ is obtained from a configuration structure $\Conf$, and can take the form of a map $\odot_x:\Conf\to x\grpAct\Conf$ where, for all $y\in\Conf$, $\odot_x(y)=x\Del y$ (see an illustration on Example~\ref{ex:fixed-point}). However, this does not map prime elements of its source to prime elements of its targets, take for instance $\odot_{\set{b}}(\set{a})=\set{a,b}$ in Example~\ref{ex:fixed-point}. This would lead to a complex definition of the action of symmetric residuation on prime event structures, where events would be erased and created during computations. 
	
	For all stable configuration structure $\Conf$ and all finite configuration $x\in\Conf$, we establish now the existence of a bijective map $\sigma_x:\pr(\Conf)\to\pr(x\grpAct\Conf)$ (Definition~\ref{def:res-map}) which allows us to show that prime elements are indeed preserved by symmetric residuation, although their causal relationship is transformed (\autoref{thm:effect}). We first need to state some useful properties of prime elements in stable configuration structures.
	
	In prime algebraic partial orders, complete primes have a notoriously simpler characterization in terms of the unicity of their immediate predecessor (see Baldan \emph{et al.}~\cite[Lemma 13]{Bal_etal17}).
	
	\begin{proposition}[Unique predecessor]\label{prop:Bal_etal17}
		\label{prop:prime-complete-pred}
		Given a prime algebraic partial order \(\po\), an element \(x \in\po\) is a complete prime if and only if it has a unique immediate predecessor, \ie there exists $y\in\po$ such that
		\[
		\pred_\po(x)\coloneqq \max_\po({\down}_\po\set{x}\bs\set{x})=\set{y}
		\]
	\end{proposition}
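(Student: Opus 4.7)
My plan is to handle the easier direction first, proving ($\Leftarrow$). Assume $\pred_\po(x) = \{y\}$ and let $Y$ be pairwise consistent with $x \sqsubseteq \bigsqcup^\po Y$; I argue by contradiction, supposing that $x \not\sqsubseteq y'$ for every $y' \in Y$. In our (coherent) prime algebraic domain, the meet $y' \sqcap x$ exists for each $y' \in Y$ and lies strictly below $x$ (else $x \sqsubseteq y'$). The hypothesis $\pred_\po(x) = \{y\}$ combined with finitariness forces $y$ to be the actual maximum of $\down_\po\{x\} \setminus \{x\}$, so in particular $y' \sqcap x \sqsubseteq y$ for every $y' \in Y$. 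Applying the distributive law available in any prime algebraic domain then gives $x = x \sqcap \bigsqcup^\po Y = \bigsqcup^\po \{y' \sqcap x \mid y' \in Y\} \sqsubseteq y$, contradicting $y \sqsubset x$.

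For the converse ($\Rightarrow$), I would take $x$ to be a complete prime and set $P \coloneqq \{ p \in \pr(\po) \mid p \sqsubset x\}$. The set $P$ is bounded above by $x$, so by coherence the join $y \coloneqq \bigsqcup^\po P$ exists and satisfies $y \sqsubseteq x$. This inequality must be strict: otherwise $x \sqsubseteq \bigsqcup^\po P$ would, by complete primeness, produce $p \in P$ with $x \sqsubseteq p \sqsubset x$, an absurdity. Finally, for any $z$ with $z \sqsubset x$, prime algebraicity writes $z = \bigsqcup^\po \dec{z}_\po$, and each prime of $\dec{z}_\po$ belongs to $P$; hence $z \sqsubseteq y$. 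Thus $y$ is the maximum, a fortiori the unique maximal element, of $\down_\po\{x\} \setminus \{x\}$, so $\pred_\po(x) = \{y\}$ as required.

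The subtlety to watch is ensuring that all the meets, joins, and the distributive law invoked above are actually available, and that the passage from \emph{unique maximal} to \emph{maximum} strict predecessor is justified. This is where I would lean on $\po$ being a prime algebraic domain in the sense of Definition~\ref{def:prime-alg} (finitary, coherent, prime algebraic)---the setting supplied by stable configuration structures through Winskel's representation---rather than a merely prime algebraic partial order. A minor caveat is that a minimal element has $\pred_\po(x) = \emptyset$, so the statement tacitly excludes that degenerate case; this matches the convention of Baldan \emph{et al.}~\cite{Bal_etal17} and is harmless for the applications made in Section~\ref{sec:prime-preserving}.
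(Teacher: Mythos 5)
Your proposal is sound, but both of its directions genuinely differ from the paper's. For ($\Rightarrow$) the paper argues only by contradiction: if a complete prime $p$ had two distinct maximal strict predecessors $y,y'$, then $p\sqsubseteq\bigsqcup\set{y,y'}$ while $p$ lies below neither, contradicting complete primeness; your construction $y\coloneqq\bigsqcup^\po\set{p\in\pr(\po)\mid p\sqsubset x}$ is stronger, since it also produces an immediate predecessor and shows it is the \emph{maximum} of $\down_\po\set{x}\setminus\set{x}$, points the paper leaves implicit. For ($\Leftarrow$) the paper stays inside prime algebraicity: if $x$ were not prime, every complete prime strictly below $x$ would lie below the unique predecessor $y$, whence $x=\bigsqcup^\po\dec{x}_\po=\bigsqcup^\po\dec{y}_\po=y$, contradicting $y\sqsubset x$; you instead invoke binary meets and the infinite distributive law $x\sqcap\bigsqcup^\po Y=\bigsqcup^\po\set{x\sqcap y'\mid y'\in Y}$. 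These facts do hold in finitary coherent prime algebraic domains (both are themselves proved by decomposing into complete primes and using their prime-completeness), so your argument goes through, but it leans on auxiliary machinery the paper never establishes, whereas the paper's argument uses only the definitions at hand. Note also that the two proofs share the same tacit step, the passage from \enquote{unique maximal strict predecessor} to \enquote{maximum strict predecessor}: your appeal to finitariness alone does not justify it for non-compact $x$ (an infinite configuration may have a single maximal strict sub-configuration that does not dominate all others), and the paper makes the same silent assumption when it writes $\bigsqcup^\po\dec{y}_\po=\bigsqcup^\po\dec{y'}_\po$. Since the proposition is applied only to complete primes, which are compact and hence have finite principal downsets by finitariness, this is harmless—as is the minimal-element caveat you rightly flag, which the paper does not mention.
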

	
	\begin{proof}
		Suppose by contradiction that \(p \in \po\) is prime but that it has two different immediate predecessors \(y\) and \(y' \in \po\).
		Then, \(p \sqsubseteq \bigsqcup \{y, y'\}\) and yet there is no \(z \in \{y, y'\}\) such that \(p \sqsubseteq z\), contradicting the prime completeness of \(p\).
		
		Conversely, suppose \(y \in X\) has a unique immediate predecessor \(y'\). Since $\po$ is prime algebraic we have $y=\Sup^\po\dec{y}_\po$. By contradiction, suppose that $y\not\in\pr(\po)$, we would have $\Sup^\po\dec{y}_\po = \Sup^\po\dec{y'}_\po=y'$. This would contradict $y'\subsetneq y$ and therefore $y$ must be a complete prime.
	\end{proof}
	
	\def\der#1{\updelta_{#1}}
	\begin{definition}

		Let $\Conf\in\mc S_E$ be stable configuration structure and $p\in\pr(\Conf)$. We define the \emph{derivative} of $p$ in $\Conf$  as $\der\Conf(p)\coloneqq a$ whenever $p\bs\pred_\Conf(p)=\set{a}$.
	\end{definition}
	
	\begin{proposition}[Event introduction]\label{prop:prime-prop}
		Let $\Conf\in\mc S_E$. For all $a\in E$ and $x\in\Conf$, $a\in x$ if only if there exists $p\in\pr(\Conf)$ such that $\der\Conf(p)=a$ and $p\subseteq x$. We say that $p$ \emph{introduces} $a$ in $\Conf$.
	\end{proposition}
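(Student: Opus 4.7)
The plan is to handle the two directions of the biconditional separately. The $(\Leftarrow)$ direction is immediate: if $p \in \pr(\Conf)$ satisfies $\der\Conf(p) = a$ and $p \subseteq x$, then $a \in p \subseteq x$, so $a \in x$.

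For the nontrivial direction $(\Rightarrow)$, fix $a \in x$ and set
\[P_a \coloneqq \set{q \in \pr(\Conf) \mid q \subseteq x \text{ and } a \in q}.\]
I first show $P_a \neq \emptyset$. Since $\Conf$ is stable, $(\Conf, \subseteq)$ is a prime algebraic domain by Winskel's theorem, so $x = \Sup^\Conf \dec{x}_\Conf$. In this domain, joins of bounded sets of configurations coincide with their set-theoretic union (by closure under bounded union, applied inductively), hence $x = \bigcup \dec{x}_\Conf$. Thus $a \in x$ forces $a \in p_0$ for some $p_0 \in \dec{x}_\Conf$, giving $p_0 \in P_a$. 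To extract a $\subseteq$-minimal element of $P_a$, note that $p_0$ is compact and the domain is finitary, so $\down_\Conf\set{p_0}$ is finite; the nonempty finite set $P_a \cap \down_\Conf\set{p_0}$ therefore admits a $\subseteq$-minimal element $p^*$, which is minimal in $P_a$ as a whole.

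The heart of the argument is verifying $\der\Conf(p^*) = a$. By Proposition~\ref{prop:Bal_etal17}, $p^*$ has a unique immediate predecessor $q \in \Conf$. The connectedness axiom of stability provides some $b \in p^*$ with $p^* \bs \set{b} \in \Conf$; since $p^* \bs \set{b} \in {\down}_\Conf\set{p^*} \bs \set{p^*}$ and $q$ is the unique maximum of that set, a cardinality argument ($\lvert p^* \bs \set{b}\rvert = \lvert p^*\rvert - 1$ and $p^* \bs \set{b} \subseteq q \subsetneq p^*$) yields $q = p^* \bs \set{b}$, so $\der\Conf(p^*) = b$. Suppose toward a contradiction that $b \neq a$; then $a \in p^* \bs \set{b} = q$, and applying prime algebraicity to $q$ gives some prime $p' \in \dec{q}_\Conf$ with $a \in p'$. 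Then $p' \subseteq q \subsetneq p^* \subseteq x$ places $p'$ in $P_a$ strictly below $p^*$, contradicting minimality.

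The principal subtlety I anticipate is the cardinality step pinning down $q = p^* \bs \set{b}$, which is what ties Winskel's order-theoretic characterisation of primes to the set-theoretic notion of derivative; once this is in hand, the minimality-and-prime-algebraicity contradiction closes the case cleanly.
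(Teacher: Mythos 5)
Your proof is correct, and although it opens exactly as the paper's does (nonemptiness of $P_a$ via prime algebraicity and the identification of a configuration with the union of the primes below it, then passage to a minimal element), it closes by a genuinely different argument. The paper first proves that the minimal elements of $P_a$ form a singleton, using closure under intersection (two distinct minimal introducers $p\neq q$ would give a configuration $p\cap q$ containing $a$, hence a prime containing $a$ strictly below them), and then asserts that the unique immediate predecessor $z$ of this unique minimal prime satisfies $a\notin z$ and $\der\Conf(p)=a$; in effect it pre-proves the unicity statement of Proposition~\ref{prop:unicity}. You never use closure under intersection or uniqueness of the minimal element: instead you use connectedness, together with finitariness and compactness of primes, to pin the unique immediate predecessor down as $p^*\bs\set{b}$ — thereby actually establishing that $p^*\bs\pred_\Conf(p^*)$ is a singleton, a well-definedness point the paper leaves implicit in its definition of the derivative — and then exclude $b\neq a$ by applying prime algebraicity to the predecessor and contradicting minimality. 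Your route is thus somewhat more self-contained on the derivative, at the price of explicitly invoking finitariness. Two small glosses in your write-up — identifying $\Sup^\Conf\dec{x}_\Conf$ with the set-theoretic union for possibly infinite families, and promoting the unique maximal element of ${\down}_\Conf\set{p^*}\bs\set{p^*}$ to a maximum — are at the same level of informality as the paper's own proof (the latter is licensed by finitariness), so I do not count them as gaps; note also that the sandwich $p^*\bs\set{b}\subseteq q\subsetneq p^*$ already forces $q=p^*\bs\set{b}$ with no cardinality count, so no finiteness of $p^*$ is needed at that step.
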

	\begin{proof}
		The only if part is trivial since $\der\Conf(p)$ implies $a\in p$ and together with  $p\subseteq x$ it implies $a\in x$. For the other part, let $x\in\Conf$ such that $a\in x$. Since $\Conf$ is prime algebraic, $x=\Sup^\Conf\dec{x}_\Conf$ which is also the union of all prime elements below $x$ (by stability of $\Conf$). This implies that the set $P_a\coloneqq\set{p\in \dec{x}_\Conf\mid a\in p}$ is not empty. Take the minimal elements $\mathit{min}(P_a)$ of $P_a$ and suppose, by contradiction that it is not a singleton.  
		We would have $p\in \mathit{min}(P_a)$ and $q\in\mathit{min}(P_a)$ with $a\in p\cap q=:y$ and $p\neq q$. Now $y$ is a configuration of $\Conf$ (by stability) which is either prime and contains $a$ or has prime containing $a$ below it. This would contradict the minimality of $\mathit{min}(P_a)$. Let $\mathit{min}(P_a)=\set{p}$. By Proposition~\ref{prop:Bal_etal17}, $p$ has a single immediate predecessor $z$ such that $a\not\in z$ and $\der\Conf(p)=a$.
	\end{proof}
	
	\begin{proposition}[Unicity of event introducer]\label{prop:unicity}
		Let $\Conf\in\mc S_E$. For all $a\in E$ such that $a\in x$ for some $x\in\Conf$, there is a unique $p\in\pr(\Conf)$ that introduces $a$.
	\end{proposition}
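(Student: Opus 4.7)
The plan is to combine Proposition~\ref{prop:prime-prop} (which already yields existence and, implicitly, a uniqueness argument local to a fixed $x$) with the unique-predecessor characterisation of primes from Proposition~\ref{prop:Bal_etal17}, leveraging closure under intersection from stability to reduce uniqueness to a contradiction.

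More precisely, suppose $p_1,p_2 \in \pr(\Conf)$ both introduce $a$, i.e.\@ $\der\Conf(p_i)=a$ for $i=1,2$. The first observation I would record is an auxiliary monotonicity fact: if $q \in \pr(\Conf)$ and $z \in \Conf$ with $z \subsetneq q$, then $z \subseteq \pred_\Conf(q)$. This follows directly from Proposition~\ref{prop:Bal_etal17}, which says $\pred_\Conf(q) = \{y\}$ is the \emph{unique} maximal element of ${\down}_\Conf\set{q}\bs\set{q}$; since $\Conf$ is finitary and prime algebraic, every element of ${\down}_\Conf\set{q}\bs\set{q}$ lies below this unique maximum.

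Using that fact, I first rule out the comparable case: if $p_1 \subsetneq p_2$, the auxiliary lemma gives $p_1 \subseteq \pred_\Conf(p_2)$, so $a \in p_1 \subseteq \pred_\Conf(p_2)$, contradicting $\der\Conf(p_2)=a$ (which requires $a \in p_2 \bs \pred_\Conf(p_2)$). Hence $p_1$ and $p_2$ are incomparable. Then I invoke closure under intersection (Definition~\ref{def:stable}): $p_1 \cap p_2 \in \Conf$, it contains $a$, and incomparability ensures $p_1 \cap p_2 \subsetneq p_1$ and $p_1 \cap p_2 \subsetneq p_2$.

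Now I apply Proposition~\ref{prop:prime-prop} to $p_1 \cap p_2$: since $a \in p_1 \cap p_2$, there exists $p' \in \pr(\Conf)$ with $\der\Conf(p') = a$ and $p' \subseteq p_1 \cap p_2$. But then $p' \subsetneq p_1$, and the auxiliary lemma yields $p' \subseteq \pred_\Conf(p_1)$, whence $a \in \pred_\Conf(p_1)$, contradicting $\der\Conf(p_1)=a$. Therefore $p_1 = p_2$, proving uniqueness. The main subtlety I expect is making sure the auxiliary monotonicity claim is justified crisply from Proposition~\ref{prop:Bal_etal17} without re-proving prime algebraicity; everything else is a direct combination of stability and the definition of $\der\Conf$.
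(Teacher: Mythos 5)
Your proof is correct and follows essentially the same route as the paper's: closure under intersection gives a configuration containing $a$ strictly below one of the two primes, which by the unique-immediate-predecessor characterization (Proposition~\ref{prop:Bal_etal17}) must lie below that prime's $\pred_\Conf$, contradicting that the prime introduces $a$. You merely make explicit the comparable/incomparable case split that the paper glosses over, and your final detour through Proposition~\ref{prop:prime-prop} is unnecessary, since $a\in p_1\cap p_2\subsetneq p_1$ already yields the contradiction directly via your auxiliary monotonicity fact.
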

	\begin{proof}
		This is a consequence of $\Conf$ being closed under intersection: by contradiction, suppose there are two prime elements $p$ and $q$ that introduce $a$. We would have $a\in p\cap q$ and $p\cap q\subseteq p$, which would contradict $a\not\in\pred_\Conf(p)$ (since $p\bs\pred_\Conf(p)=\set{a}$).
	\end{proof}
	
	\begin{corollary}[to Proposition~\ref{prop:unicity}]
		Let $\Conf\in\mc S_E$. For all finite $x\in\Conf$ and $p\in\pr(\Conf)$, there is a unique $q\in\pr(x\grpAct\Conf)$ such that $\der\Conf(p)=\der{x\grpAct\Conf}(q)$.
	\end{corollary}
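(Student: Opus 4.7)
My plan is to derive this corollary as an almost immediate consequence of Propositions~\ref{prop:prime-prop} and \ref{prop:unicity} applied to the residual structure $x\grpAct\Conf$. The first thing I need is that the target structure is itself stable, so that those two propositions apply to it; this is guaranteed by Theorem~\ref{thm:stable}. Writing $a \coloneqq \der\Conf(p)$, the entire argument then reduces to producing one configuration $w \in x\grpAct\Conf$ with $a \in w$: Proposition~\ref{prop:prime-prop} will immediately supply a prime $q \in \pr(x\grpAct\Conf)$ with $\der{x\grpAct\Conf}(q) = a$, and Proposition~\ref{prop:unicity}, applied inside the stable structure $x\grpAct\Conf$, will pin down that $q$ uniquely.

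So the only real content is the existence of such a $w$. Recall from Definition~\ref{def:sym-residuation} that the configurations of $x\grpAct\Conf$ are exactly the sets of the form $y\Del x$ for $y\in\Conf$, and that $a \in y\Del x$ holds precisely when $a$ belongs to exactly one of $y$ and $x$. I will split on whether $a\in x$. If $a\notin x$, then since $a\in p$ and $p\in\Conf$, the configuration $w \coloneqq p\Del x$ lies in $x\grpAct\Conf$ and contains $a$. If on the other hand $a\in x$, I use rootedness of $\Conf$: taking $y=\emptyset$ yields $w \coloneqq \emptyset\Del x = x \in x\grpAct\Conf$, and $a\in w$ by assumption. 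In either case $a$ appears in some configuration of $x\grpAct\Conf$, which is all that Proposition~\ref{prop:prime-prop} requires.

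I do not anticipate any real obstacle here; the statement is essentially a bookkeeping lemma saying that an event which exists in $\Conf$ continues to be introduced by a (necessarily unique) prime in every symmetric residual. The two slightly delicate points to watch are (i) remembering to invoke Theorem~\ref{thm:stable} so that Propositions~\ref{prop:prime-prop} and \ref{prop:unicity} may be used in $x\grpAct\Conf$, and (ii) handling the case $a\in x$, where the natural candidate $p\Del x$ fails to contain $a$ and one must fall back on $x$ itself as a configuration of the residual.
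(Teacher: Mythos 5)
Your proposal is correct and follows the same route as the paper, whose proof is just the terse remark that symmetric residuation preserves stability (\autoref{thm:stable}) and neither introduces nor removes events, after which Propositions~\ref{prop:prime-prop} and~\ref{prop:unicity} apply in $x\grpAct\Conf$. Your case analysis ($a\notin x$ giving $p\Del x$, and $a\in x$ giving $x=\emptyset\Del x$ via rootedness) simply makes explicit the ``does not remove events'' claim that the paper leaves implicit.
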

	\begin{proof}
		Using the fact that symmetric residuation does not introduce nor remove events 
		and preserves stability (\autoref{thm:stable}).
	\end{proof}
	
	Thanks to the above corollary, we can associate prime elements of $\Conf$ to prime elements of the symmetric residual of $\Conf$ after a finite configuration:
	
	\begin{definition}
		\label{def:res-map}
		Let $\Conf\in\mc S_E$. For all finite $x\in\Conf$, define the \emph{(symmetric) residuation map} $\sigma_x:\pr(\Conf)\to\pr(x\grpAct\Conf)$ as $\sigma_x(p)=q$ if and only if $\der\Conf(p)=\der{x\grpAct\Conf}(q)$.
	\end{definition}
	
	\begin{example}\label{ex:rec1}
		Dashed arrows denote the residuation map $\sigma_{\{b, c\}}: \pr(\Conf) \to \pr(\{b, c\} \grpAct \Conf)$.
		
		\noindent\centering
\begin{tikzpicture}
	\node (empt1) {\(\emptyset\)};
	\recCS
	
	\node (empt2) [right = 5cm of empt1] {\(\emptyset\)};
	\recsCS
	
	\draw[dmap] (a1) to[out=330, in=190] (ac2);
	\draw[dmap] (b1) to[out=335, in=190] (bc2);
	\draw[dmap] (bc1) to [out=40, in=100] (c2);
	\draw[dmap] (bcd1) to[out=20, in=190] (d2);
	
	\begin{scope}[on background layer]
		\node [fill=fillcolor, fit=(bc1), rounded corners=.35cm, inner sep=.3pt, draw=fillborder] {};
	\end{scope}	
\end{tikzpicture}
	\end{example}
	
	\begin{definition}
		We write  $x\ort\po y$ when two elements \(x, y \in \po\) are \emph{orthogonal}, \ie compatible and incomparable: if $x\comp_\po y$ and \(x \nsubseteq y\), \(y \nsubseteq x\). 
	\end{definition}
	
	We can 
	now 
	characterize algebraically the effect of symmetric residuation on prime elements.
	
	\begin{theorem}[Effects of residuation on causal structure]\label{thm:effect}
		Let $\Conf\in\mc S_E$. For all finite $x\in\Conf$, let $\sigma_x:\pr(\Conf)\to \pr(x\grpAct\Conf)$ be the residuation map of Definition~\ref{def:res-map}. 
		For all distinct $p,q\in\pr(\Conf)$, the following 
		 holds:
		\begin{align}
	\text{If $p\ort\Conf q$ then} && \sigma_x(p) & \ort{x\grpAct\Conf} \sigma_x(q) \label{eq:preserve-ort}\\
	\text{If $p\subseteq q$ then} && q\subseteq x &\implies \sigma_x(q)\subseteq \sigma_x(p) \label{eq:flip-cause}\\
	\text{If $p\subseteq q$ then} && p\not\subseteq x\wedge q\not\subseteq x &\implies \sigma_x(p)\subseteq\sigma_x(q) \label{eq:preserve-cause} \\
	\text{If $p\subseteq q$ then} && p\subseteq x \wedge q\not\subseteq x&\implies  \sigma_x(p) \ncomp_{x\grpAct\Conf} \sigma_x(q) \label{eq:cause-to-conflict}\\
	\text{If $p  \ncomp_{\Conf} q$ then} &&
	p\subseteq x &\implies \sigma_x(p)\subseteq\sigma_x(q) \label{eq:conflict-to-cause}\\
	\text{If $p \ncomp_{\Conf} q$ then} && p\not\subseteq x\wedge q\not\subseteq x& \implies \sigma_x(p) \ncomp_{x\grpAct\Conf} \sigma_x(q) \label{eq:preserve-conflict}
\end{align}
		Notice that for all $p',q'\in\pr(\Conf)$ exactly one of the above premise holds with either $p=p'$ and $q=q'$ or $p=q'$ and $q=p'$.
	\end{theorem}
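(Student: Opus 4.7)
The plan is to reduce each of the six implications to an elementary membership statement about configurations of $\Conf$ via a single characterization of the image $\sigma_x(p)$. Since $x\grpAct\Conf$ is stable by \autoref{thm:stable}, it is closed under intersection, and since $\sigma_x(p)$ is by definition the unique prime of $x\grpAct\Conf$ with derivative $a = \der\Conf(p)$, \autoref{prop:prime-prop} and \autoref{prop:unicity} applied to $x\grpAct\Conf$ yield
\[
\sigma_x(p) \;=\; \bigcap\,\{\,z \in x\grpAct\Conf \mid a \in z\,\},
\]
or equivalently: for every $z\in x\grpAct\Conf$, $\sigma_x(p)\subseteq z$ iff $a\in z$. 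Writing $b = \der\Conf(q)$, the inclusion $\sigma_x(p)\subseteq\sigma_x(q)$ then reduces to $a\in\sigma_x(q)$, and $\sigma_x(p)\ncomp_{x\grpAct\Conf}\sigma_x(q)$ reduces to the non-existence of any $z\in x\grpAct\Conf$ containing both $a$ and $b$.

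The second step is to translate these membership statements back to configurations of $\Conf$ via the bijection $y\mapsto y\Del x$. Whether $a\in y\Del x$ depends on whether $a\in x$: if $a\notin x$ then $a\in y\Del x$ iff $a\in y$, whereas if $a\in x$ then $a\in y\Del x$ iff $a\notin y$. Since $p$ is the minimal configuration of $\Conf$ containing $a$, one has $p\subseteq x$ iff $a\in x$, so the hypotheses on whether $p$ and $q$ lie in $x$ directly dictate which branch of the translation applies.

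With these reductions in hand, each implication becomes a direct check. For \autoref{eq:flip-cause} ($p\subseteq q\subseteq x$), the goal $b\in\sigma_x(p)$ unfolds to \enquote{every $y\in\Conf$ with $a\notin y$ satisfies $b\notin y$}, i.e.\ every $y$ containing $b$ also contains $a$, which is precisely $p\subseteq q$; the cases \autoref{eq:preserve-cause} and \autoref{eq:cause-to-conflict} unfold analogously. For the conflict cases, the preliminary remark that $p\ncomp_\Conf q$ is equivalent to \enquote{no $y\in\Conf$ contains both $a$ and $b$} (again via \autoref{prop:unicity}) makes \autoref{eq:preserve-conflict} immediate and, combined with the translation, also settles \autoref{eq:conflict-to-cause}. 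Finally, \autoref{eq:preserve-ort} requires the symmetric conclusions $a\notin\sigma_x(q)$ and $b\notin\sigma_x(p)$ together with a compatibility witness; depending on which of $a,b$ lie in $x$, this witness is one of $\emptyset$, $p$, $q$, or $p\cup q$.

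The only real obstacle is bookkeeping: one must make explicit the implicit constraints that each premise imposes. For example, when $p\ncomp_\Conf q$ and $p\subseteq x$, one should observe that $q\not\subseteq x$ is automatic (else $x$ would be a common upper bound of $p$ and $q$), and moreover that $b\notin x$ (else the unique prime of $\Conf$ introducing $b$, namely $q$, would lie in $x$); analogous implicit constraints prune the degenerate sub-cases in the other premises. Once these are surfaced, the verification proceeds uniformly across all six implications.
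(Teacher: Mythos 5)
Your proposal is correct and follows essentially the same route as the paper's proof: both rest on Propositions~\ref{prop:prime-prop} and~\ref{prop:unicity} (so that $\sigma_x(p)\subseteq z$ iff $\der\Conf(p)\in z$), on \autoref{thm:stable} for compatibility-as-common-upper-bound, on translating membership along $y\mapsto y\Del x$ according to whether $p,q\subseteq x$, and on the same witness configurations ($\emptyset$, $p$, $q$, $p\cup q$, which become the paper's $x$, $x\bs p$ or $x\Del p$, $x\bs q$ or $x\Del q$, $x\Del(p\cup q)$). The only place you compress is \autoref{eq:preserve-ort}, where the incomparability half ($a\notin\sigma_x(q)$ and $b\notin\sigma_x(p)$) also needs witnesses from the same family, exactly as in the paper's three sub-cases; this is routine within your setup and not a gap.
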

	
	\begin{example}
		We illustrate \autoref{eq:conflict-to-cause} below, taking \(p = \{a\}\), \(q = \{b, d\}\) and \(x = \{a, c\}\), and representing the action of \(\sigma_{\set{a, c}}\) on \(\set{a}\) and \(\set{b, d}\) with dashed lines:	
		
		\noindent

{
	
\centering

\begin{tikzpicture}
	\node (empt1) {\(\emptyset\)};
	\node (a1) [above left = .3cm and .3cm of empt1] {\(\{a\}\)};
	\node (b1) [above right  = .3cm and .3cm of empt1] {\(\{b\}\)};
	\node (c1) [above  = .5cm of empt1] {};
	\node (ac1) [above = .9cm of a1] {\(\{a, c\}\)};
	\node (bd1) [above = .9cm of b1] {\(\{b, d\}\)};
	\node (abc1) [above = .5cm of c1] {\(\{a, b\}\)};
	\node (caption1) [below = .2cm of empt1] {\(\Conf\)};
	\draw [su] (empt1) -- (a1);
	\draw [su] (empt1) -- (b1);
	\draw [su] (a1) -- (abc1);
	\draw [su] (a1) -- (ac1);
	\draw [su] (b1) -- (abc1);
	\draw [su] (b1) -- (bd1);
	\begin{scope}[on background layer]
		\node [fill=fillcolor, fit=(ac1), rounded corners=.35cm, inner sep=.3pt, draw=fillborder] {};
	\end{scope}	

	\node (empt2) [right = 4cm of empt1] {\(\emptyset\)};
	\node (c2) [above = .4cm of empt2] {\(\{c\}\)};
	\node (bc2) [above  left = .1cm and .6cm of c2] {\(\{b, c\}\)};
	\node (ac2) [above right = .1cm and .6cm of c2] {\(\{a, c\}\)};
	\node (abc2) [above = .7cm of c2] {\(\{a, b, c\}\)};
	\node (abcd2) [above = .2cm of abc2] {\(\{a, b, c, d\}\)};
	\node (caption2) [below = .2cm of empt2] {\(\{a, c\} \grpAct \Conf\)};
	\draw [su] (empt2) -- (c2);
	\draw [su] (c2) -- (bc2);
	\draw [su] (c2) -- (ac2);
	\draw [su] (bc2) -- (abc2);
	\draw [su] (ac2) -- (abc2);
	\draw [su] (abc2) -- (abcd2);
	
	\draw[dmap] (a1) to[out=260, in=200] (c2);
	\draw[dmap] (bd1) to[out=60, in=180] (abcd2);
	
\end{tikzpicture}

}
		
		We have \(\set{a} \not\comp_\Conf \set{b, d}\), \(\set{a} \subseteq \set{a, c}\) and indeed, \(\sigma_x(\set{a}) = \{c\} \subseteq \sigma_{\set{a, c}}(\set{b, d}) = \{a, b, c, d\}\).
	\end{example}
	
	\begin{proofwq}
		In the following we use the notation $x:a$ for $a\in x$, $x:\bar b$ for $b\not\in x$ and use concatenations like $x:a\bar b$, for conjunction of these properties. Suppose $p$ introduces $a$ and $q$ introduces $b$ in $\Conf$. We verify \autoref{eq:preserve-ort} -- \autoref{eq:preserve-conflict} by case analysis.
		\begin{itemize}
			\item[\autoref{eq:preserve-ort}] We must check that $p\ort\Conf q$ implies $\sigma_x(p)\ort{x\grpAct\Conf}\sigma_x(q)$. 
			Since $p\ort\Conf q$ by hypothesis, we have $p:a\bar b$ and $q:a \bar b$, otherwise by Proposition~\ref{prop:prime-prop} we would have either $p\subseteq q$ or $q\subseteq p$ and they would not be orthogonal. We have three sub-cases to consider.
			\begin{itemize}
				\item[\autoref{eq:preserve-ort}.1] Suppose $x:ab$. By Proposition~\ref{prop:prime-prop} we have $p\subseteq x$ and $q\subseteq x$. It entails that $p\Del x=x\bs p$ and $q\Del x=x\bs q$ are configurations of $x\grpAct\Conf$. Since $p:a\bar b$, and $x:ab$, we have $x\bs p:\bar ab$. By the same reasoning we have $x\bs q:a\bar b$. Now, $x\bs p$ and $x\bs q$ are below $x$ and incomparable. Since $x\bs p:b$ we have $\sigma_x(q)\subseteq x\bs p$ (by Proposition~\ref{prop:prime-prop}) and $\sigma_x(p)\subseteq x\bs q$ by the same argument. It follows that $\sigma_x(p)\ort{x\grpAct\Conf}\sigma_x(q)$.
				
				\item[\autoref{eq:preserve-ort}.2] Suppose $x:\bar ab$ (the argument for $x:a\bar b$ is symmetrical). We have $p\cap x\in\Conf$ (by stability of $\Conf$) and $p\cup q\in\Conf$ (since $p\ort\Conf q$). It entails that $x\Del (p\cap x) = x\bs p\in x\grpAct\Conf$ and $x\Del(p\cup q)\in x\grpAct\Conf$. Now 
				$x:\bar ab$ and $p:a\bar b$ implies $x\bs p:\bar ab$. Also, $p\cup q:ab$ implies $x\Del(p\cup q):a\bar b$. It entails that $\sigma_x(q)\subseteq x\bs p$ and $\sigma_x(p)\subseteq x\Del(p\cup q)$ (by Proposition~\ref{prop:prime-prop}). Additionally $x:\bar ab$ and $p:a\bar b$ implies $x\Del p:ab$ and is a configuration of $x\grpAct\Conf$. It follows that $\sigma_x(p)\subseteq x\Del p$ and $\sigma_x(q)\subseteq x\Del p$, so they are compatible. We prove they are incomparable by contradiction. Suppose $\sigma_x(p)\subseteq \sigma_x(q)$. This would entail $\sigma_x(p)\subseteq x\bs p$ which would contradict $x\bs p:\bar a$ (by Proposition~\ref{prop:prime-prop}). Still by contraction suppose $\sigma_x(q)\subseteq \sigma_x(p)$. This would entail $\sigma_x(q)\subseteq x\Del(p\cup q)$ which would contradict $x\Del(p\cup q):\bar b$. We therefore have $\sigma_x(p)\ort{x\grpAct\Conf}\sigma_x(q)$. 
				
				\item[\autoref{eq:preserve-ort}.3] Suppose $x:\bar a\bar b$. We have $x\Del (p\cup q):ab$, $x\Del p:a\bar b$ and $x\Del q:\bar ab$ are all configurations of $x\grpAct\Conf$. As a consequence $\sigma_x(p)\subseteq x\Del p$, $\sigma_x(q)\subseteq x\Del q$, $\sigma_x(p)\subseteq x\Del (p\cup q)$ and $\sigma_x(q)\subseteq x\Del (p\cup q)$. Therefore $\sigma_x(p)$ and $\sigma_x(q)$ are compatible. We prove they are incomparable by contradiction. Suppose $\sigma_x(p)\subseteq \sigma_x(q)$. It would entail that $\sigma_x(p)\subseteq x\Del q$ which would contradict $x\Del q:\bar a$. Still by contradiction suppose $\sigma_x(q)\subseteq \sigma_x(p)$. This would imply that $\sigma_x(q)\subseteq x\Del p$ which would contradict $x\Del p:\bar b$. Therefore $\sigma_x(p)\ort{x\grpAct\Conf}\sigma_x(q)$.
			\end{itemize}
			\item[\autoref{eq:flip-cause}] We verify $p\subseteq q$ and $q\subseteq x$ implies $\sigma_x(q)\subseteq \sigma_x(p)$. Combining hypothesis and  Proposition~\ref{prop:prime-prop}, we have $p:a\bar b$, $q:ab$ and $x:ab$.
			By contradiction, we show that $\sigma_x(q)\subseteq \sigma_x(p)$. Suppose not, in which case we have $\sigma_x(p):a\bar b$. So we have that $x\Del\sigma_x(p):\bar ab$ is a configuration of $\Conf$ in which $b$ appears without $a$. This violates the fact that $q:ab$ introduces $b$ in $\Conf$ and contains $a$. Therefore $\sigma_x(q)\subseteq \sigma_x(p)$.
			
			\item[\autoref{eq:preserve-cause}]  We verify $p\subseteq q$ and $p\not\subseteq x\wedge q\not\subseteq x$ implies $\sigma_x(p)\subseteq\sigma_x(q)$. Again by contradiction, suppose $\sigma_x(p)\not\subseteq\sigma_x(q)$. We would have $\sigma_x(q):\bar ab$ and there would be a configuration $x\Del\sigma_x(q):\bar ab$ in $\Conf$, since by hypothesis $x:\bar a\bar b$. This would contradict the fact that any configuration containing $b$ must be above $q:ab$ and hence should also contain $a$. Therefore $\sigma_x(p)\subseteq \sigma_x(q)$. 
			
			\item[\autoref{eq:cause-to-conflict}] We verify $p\subseteq q$, $p\subseteq x$ and $q\not\subseteq x$ implies $\sigma_x(p)\not\comp_{x\grpAct\Conf} \sigma_x(q)$. By contradiction, suppose there is a configuration $y\in x\grpAct\Conf$ such that $\sigma_x(p)\subseteq y$ and $\sigma_x(q)\subseteq y$. We would have $y:ab$ and there would be a configuration $x\Del y\in \Conf$. Now by hypothesis we have $x:a\bar b$, therefore $x\Del y:\bar ab$. This would contradict the hypothesis $q:ab$ and therefore $\sigma_x(p)\not\comp_{x\grpAct\Conf} \sigma_x(q)$.
			
			\item[\autoref{eq:conflict-to-cause}] We verify $p\not\comp_\Conf q$ and $p\subseteq x$ implies $\sigma_x(p)\subseteq\sigma_x(q)$. By contradiction, suppose that $\sigma_x(p)\not\subseteq\sigma_x(q)$. We would have $\sigma_x(q):\bar ab$ and $x\Del\sigma_x(q):ab$ would be a configuration of $\Conf$. This would contradict the hypothesis $p\not\comp_\Conf q$ and therefore $\sigma_x(p)\subseteq\sigma_x(q)$. 
			
			\item[\autoref{eq:preserve-conflict}] We verify $p\not\comp_\Conf q$ and $p\not\subseteq x$ and $q\not\subseteq x$ implies $\sigma_x(p)\mkern-1mu \ncomp_{x\grpAct\Conf}\mkern1mu\sigma_x(q)$. By contradiction, suppose there is a configuration $y\in x\grpAct\Conf$ such that $\sigma_x(p)\subseteq y$ and $\sigma_x(q)\subseteq y$. We would have $y:ab$ and $x\Del y:ab$ would be a configuration of $\Conf$. This contradicts the fact that $p$ and $q$ are incompatible. Therefore, $\sigma_x(p)\mkern-1mu \ncomp_{x\grpAct\Conf}\mkern1mu\sigma_x(q)$. \qed \end{itemize}
	\end{proofwq}
	
	\section{A switch operation on prime event structures}\label{sec:switch}
	
	Now that the effects of symmetric residuation are fully characterized, we proceed with the definition of symmetric residuation on prime event structures.
	
	\begin{definition}
\label{def:switch}
		Let $\ES=(E,<,\cf)$ be a prime event structure. For all finite configuration $X\in\mathit{Conf}(\ES)$ we define $X\grpAct\ES\coloneqq(E,<',\cf')$ as:
		\begin{align*}
			a<'b & \text{ if and only if } \begin{dcases*}
				a<b \text{ and }\set{a,b}\cap X=\emptyset\\
				b<a \text{ and }\set{a,b}\subseteq X\\
				a\cf b \text{ and }a\in X
			\end{dcases*} &&&&
			a\cf'b & \text{ if and only if }
			\begin{dcases*}
				a\cf b \text{ and } \set{a,b}\cap x=\emptyset\\
				a<b, a\in X\text{ and } b\not\in X
			\end{dcases*}
		\end{align*}
	where clauses on the right of the curly brackets are taken disjunctively.
	\end{definition}
	
	Note that the definition above does not imply that $X\grpAct\ES$ is a prime event structure, as the $<'$ and $\cf'$ might not satisfy the requirements listed in Definition~\ref{def:PES}. Theorem \autoref{thm:switch} will take care of both by ensuring that $X\grpAct\ES$ is indeed a group action on prime event structure, and that it properly reflects the effect of symmetric residuation of the underlying configuration structure.
	
	\begin{example}\label{ex:rec2}
		A prime event structure \(\ES\) with its configuration \(X = \{\{b\}, \{b, c\}\}\) grayed out, and \(X \grpAct \ES\):
		
		\noindent{\centering

\begin{tikzpicture}
	\node (marq1) {}; 
	\recES
	\node (marq2) [right = 5cm of marq1] {}; 
	\recsES
\end{tikzpicture}

}
	\end{example}
	
	Switching an event structure is relative to a finite configuration $X$ and is reminiscent of the Seidel switch, a  well known graph isomorphism which consists in fixing a set of vertices $X$ and flipping edges and non-edges between $X$ and its complement in the graph (the other edges remaining invariant) \cite{Seidel73}. We discuss potential interesting connections in the conclusion of this paper.
	
	For all configuration structures $\Conf$, recall that $\mathsf{E}(\Conf)$ is Winskel's construction, mapping sets of configurations to a prime event structure, via the prime elements (see \autoref{thm:ESofConf}). The following theorem, where we borrow from Definition~\ref{def:prime} that for all $x\in\Conf$, $\dec{x}_\Conf\coloneqq {\down}_\Conf\set{x}\cap\pr(\Conf)$, characterizes symmetric residual as a switch operation on prime event structures. 
	
	\begin{theorem}[Adequacy]\label{thm:switch}
		Let $\Conf\in\mc S_E$ be a stable configuration structure.
		For all finite $x\in\Conf$, the prime event structure $\mathsf{E}(x\grpAct\Conf)$ is isomorphic to $\dec{x}_\Conf\grpAct\mathsf{E}(\Conf)$.
	\end{theorem}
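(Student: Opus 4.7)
The plan is to exhibit the residuation map $\sigma_x \colon \pr(\Conf) \to \pr(x \grpAct \Conf)$ of Definition~\ref{def:res-map} as the witness of the isomorphism. It is already known to be a bijection on prime elements, so the task reduces to verifying that it transports the causality and conflict of $\dec{x}_\Conf \grpAct \mathsf{E}(\Conf)$ onto those of $\mathsf{E}(x \grpAct \Conf)$. Before invoking Definition~\ref{def:switch}, I would briefly remark that $\dec{x}_\Conf$ really is a configuration of $\mathsf{E}(\Conf)$: it is downward-closed for $<$ by construction, and conflict-free because any two primes $p, q \subseteq x$ satisfy $p \cup q \subseteq x \in \Conf$, so Definition~\ref{def:switch} applies with $X = \dec{x}_\Conf$.

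The core of the argument is a case analysis on a pair of distinct primes $p, q \in \pr(\Conf)$, driven by the six mutually exclusive premises of \autoref{thm:effect}. The key observation is that, under the identification $p \in X$ iff $p \subseteq x$, these premises pair up one-for-one with the clauses of Definition~\ref{def:switch}: \autoref{eq:preserve-cause} covers the first $<'$ clause (both primes outside $X$, causality preserved); \autoref{eq:flip-cause} covers the second (both inside $X$, causality reversed); \autoref{eq:conflict-to-cause} covers the third (a conflict becomes a causality when one prime lies in $X$); \autoref{eq:preserve-conflict} covers the first $\cf'$ clause (both outside $X$, conflict preserved); \autoref{eq:cause-to-conflict} covers the second $\cf'$ clause (a causality crossing the boundary of $X$ becomes a conflict); and \autoref{eq:preserve-ort} guarantees that concurrent primes remain concurrent, which is also what the switch yields since no clause applies. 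Translating each conclusion of \autoref{thm:effect} through Theorem~\ref{thm:ESofConf}---where inclusion on primes reads as $<$ and incompatibility reads as $\cf$---produces exactly the relation prescribed by the switch, so $\sigma_x$ preserves both $<$ and $\cf$ in both directions.

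The main obstacle is bookkeeping rather than mathematical depth: the switch is not symmetric in $p, q$, so one must carefully track which prime plays which role when invoking \autoref{thm:effect}, particularly for the asymmetric premises involving a single prime inside $X$. Two small side-verifications are worth flagging. First, Definition~\ref{def:switch} does not a priori produce a prime event structure---transitivity and antisymmetry of $<'$ along with conflict heredity are not manifest---but these follow for free from the isomorphism with $\mathsf{E}(x \grpAct \Conf)$, which is a prime event structure by Theorem~\ref{thm:ESofConf} combined with \autoref{thm:stable}. Second, the apparently missing case of Definition~\ref{def:switch} where $p, q \in X$ and $p \cf q$ is vacuous, since $p, q \subseteq x$ forces $p \cup q \subseteq x \in \Conf$ and hence $p \comp_\Conf q$, so no clause has actually been overlooked.
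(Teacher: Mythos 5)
Your proposal is correct and follows essentially the same route as the paper: it exhibits $\sigma_x$ as the isomorphism and matches the six mutually exclusive cases of \autoref{thm:effect} one-for-one with the clauses of Definition~\ref{def:switch}, reading inclusion and incompatibility of primes as $<$ and $\cf$ via \autoref{thm:ESofConf}. The only (harmless) divergence is in how the converse direction is handled---you rely on exhaustiveness and mutual exclusivity of the cases, whereas the paper notes that $\sigma_x^{-1}$ is itself the residuation map of $x\grpAct(x\grpAct\Conf)=\Conf$, making the argument symmetric---and your extra checks (that $\dec{x}_\Conf$ is a configuration of $\mathsf{E}(\Conf)$ and that the case of two conflicting primes inside $X$ is vacuous) are sound bookkeeping the paper leaves implicit.
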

	
	\begin{proofwq}
		We show that the isomorphism of event structures is given by the bijection $\sigma_x$ of Definition~\ref{def:res-map}. Consider
		\begin{alignat*}{3}
			& \ES_0 && \coloneqq \mathsf{E}(\Conf)         && =(\pr(\Conf),<_0,\cf_0)\\
			& \ES_1 && \coloneqq \dec{x}_\Conf\grpAct\ES_0   && = (\pr(\Conf),<_1,\cf_1) \\
			& \ES_2 && \coloneqq \mathsf{E}(x\grpAct\Conf) &&=(\pr(x\grpAct\Conf),<_2,\cf_2)
		\end{alignat*}
		
		We overload notations and write $a\ort{} b$ for two events in $(E,<,\cf)$ whenever $a\not< b$, $b\not< a$ and $a\not\cf b$. Since $\sigma_x^{-1}$ is the residuation map of $x\grpAct (x\grpAct\Conf)=\Conf$ the proof is symmetric and we just need to show $p\mc R_1q$ implies $\sigma_x(p)\mc R_2\sigma_x(q)$, for $\mc R_i\in\set{\ort i, <_i, \cf_i}$:
		\begin{description}
			\item[$p\ort 1q$ implies $\sigma_x(p)\ort 2\sigma_x(q)$.]
			We have $p\ort 1q$ iff $p\ort 0q$, by definition of the switch operation. Now $p\ort0q$ implies $\sigma_x(p)\ort0\sigma_x(q)$ by \autoref{thm:effect}~\autoref{eq:preserve-ort}. 
			\item[$p<_1 q$ implies $\sigma_x(p)<_2 \sigma_x(q)$.] There are three sub-cases according to Definition~\ref{def:switch}:
			\begin{itemize}
				\item $p <_0 q$ and $\set{p,q}\cap \dec{x}_\Conf=\emptyset$. We have $p<_0q$ implies $p\subseteq q$ by \autoref{thm:ESofConf}. 
				Since $\Conf$ is prime algebraic, $\Sup_\Conf\dec{x}_\Conf=x$ and we have $p\not\subseteq x$ and $q\not\subseteq x$. By \autoref{thm:effect}~\autoref{eq:preserve-cause}, this implies $\sigma_x(p)\subseteq \sigma_x(q)$ and hence $\sigma_x(p)<_2 \sigma_x(q)$.	
				
				\item $q <_0 p$ and $\set{p,q}\subseteq \dec{x}_\Conf$. We have $q<_0p$ implies $q\subseteq p$ by \autoref{thm:ESofConf}. Since $\Conf$ is prime algebraic, $\Sup_\Conf\dec{x}_\Conf=x$ and we have $p\subseteq x$ and $q\subseteq x$. By \autoref{thm:effect}~\autoref{eq:flip-cause}, we have $\sigma_x(p)\subseteq\sigma_x(q)$ which entails $\sigma_x(p)<_2\sigma_x(q)$ by definition of $<_2$ (see \autoref{thm:ESofConf}).
				
				\item $p\cf_0 q$ and $p\in \dec{x}_\Conf$. Using a similar argument as above, we deduce $p\subseteq x$. Additionally since $p\not\comp_{\ES_0}q$, we can deduce $q\not\subseteq x$, otherwise $x$ would be a bound for $p$ and $q$. We can apply \autoref{thm:effect}~\autoref{eq:conflict-to-cause} to obtain $\sigma_x(p)\subseteq \sigma_x(q)$, and as a consequence $\sigma_x(p)<_2 \sigma_x(q)$.
			\end{itemize}
			\item[$p\cf_1 q$ implies $\sigma_x(p)\cf_2 \sigma_x(q)$.] We have two sub-cases according to Definition~\ref{def:switch}:
			\begin{itemize}
				\item $p\cf_0q$ and $\set{p,q}\cap \dec{x}_\Conf=\emptyset$. We have $p\not\comp_\Conf q$, $p\not\subseteq x$ and $q\not\subseteq x$. Applying \autoref{thm:effect}~\autoref{eq:preserve-conflict}, we have $\sigma_x(p)\not\comp_{x\grpAct\Conf}\sigma_x(q)$ and thus $\sigma_x(p)\cf_2\sigma_x(q)$.
				
				\item $p<_0 q$, $p\in\dec{x}_\Conf$ and $q\not\in\dec{x}_\Conf$. We have $p\subseteq q$, $p\subseteq x$ and $q\not\subseteq x$. By \autoref{thm:effect}~\autoref{eq:cause-to-conflict}, we get $\sigma_x(p)\not\comp_{x\grpAct\Conf}\sigma_x(q)$. Therefore $\sigma_x(p)\cf_2\sigma_x(q)$.
				\qed 
			\end{itemize}
		\end{description}
	\end{proofwq}
	
	\begin{example}
		We show in \autoref{fig:sum-example} an illustration of \autoref{thm:switch}, connecting Examples~\ref{ex:rec1} and \ref{ex:rec2}.
	\end{example}
	
	\begin{figure}
		\noindent{
	\centering

\begin{tikzpicture}	
	\node (empt1) {\(\emptyset\)};
	\recCS
	
	\node (empt2) [right = 5cm of empt1] {\(\emptyset\)};
	\recsCS
	
	\begin{scope}[on background layer]
		\node [fill=fillcolor, fit=(bc1), rounded corners=.35cm, inner sep=.3pt, draw=fillborder] {};
	\end{scope}	

	\node [below = 4.5cm of empt1] (marq1) {}; 
	\recES
	\node (marq2) at (marq1 -| empt2) {}; 
	\recsES

	\node (marq3) at ($(empt1)!0.5!(marq2)+(7,0)$) {}; 
	\node (a3) [left = .7cm of marq3] {$\set{a, c}$};
	\node (b3) [right = .7cm of marq3] {\(\{b, c\}\)};
	\node (bc3) [above = .7cm of marq3] {\(\{c\}\)};
	\node (bcd3) [above = .8cm of bc3] {\(\{d\}\)};
	\node (ES3) [below = .3cm of marq3] {\(\ES(\set{b, c} \grpAct \Conf)\)};
	\draw [cf] (bc3) -- (bcd3);
	\draw [ord] (bc3) -- (b3);
	\draw [ord] (bc3) -- (a3);

	\draw[smap] ($(empt1) + (1, 0)$) to [bend right] node[mapl]{\(\grpAct_{\{b, c\}}\)} ($(empt2) + (-1, 0)$);
	\draw[smap] ($(marq1) + (1, .8)$) to [bend left] node[mapl]{\(\grpAct_{X}\)} ($(marq2) + (-1, .8)$);
	\draw[umap, transform canvas={xshift=-0.6ex}] ($(CS1) + (0, -.3)$) to node[right = .3em]{\(\mathsf{C}\)} ($(marq1) + (0, 2.8)$);
	\draw[umap, transform canvas={xshift=0.6ex}] ($(marq1) + (0, 2.8)$) to node[left = .3em]{\(\mathsf{E}\)} ($(CS1) + (0, -.3)$);
	\draw[umap, transform canvas={yshift=-.8ex}] ($(CS2) + (1 , 2)$) to  node[below = 0em]{\(\mathsf{E}\)} ($(bc3) + (-1, 0)$);
	\draw[umap, transform canvas={yshift=.8ex}] ($(bc3) + (-1,0)$) to  node[above = 0em]{\(\mathsf{C}\)} ($(CS2) + (1, 2)$);	
	\draw[smap, transform canvas={xshift=1ex}] ($(ES3) + (0, -.5)$) to [bend left] node[mapl]{\(\sigma_{\set{b, c}}\)} ($(marq2) + (1, .6)$);
\end{tikzpicture}

}
		\caption{Illustrating the correspondence between symmetric residuation and switch operations, with \(X = \{\{b\}, \{b, c\}\}\).}\label{fig:sum-example}
	\end{figure}
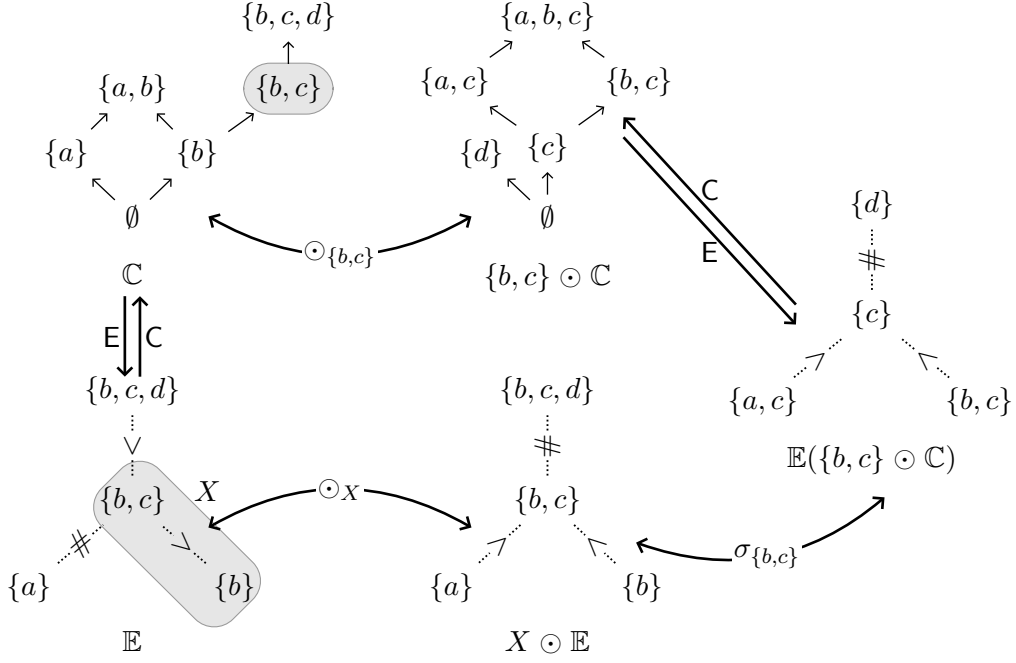
	
	\begin{corollary}[To \autoref{thm:switch}]
		Let $\Conf\in\mc S_E$ be a stable configuration structure. For all finite $x\in\Conf$, $\dec{x}_\Conf\grpAct\mathsf{E}(\Conf)$ is a prime event structure.
	\end{corollary}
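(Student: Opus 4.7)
\begin{proofwq}
The plan is to piece together three results already established in the paper, so this corollary requires no fresh combinatorial work. First, since $\Conf \in \mc S_E$ and $x \in \Conf$ is a finite configuration, \autoref{thm:stable} guarantees that the symmetric residual $x \grpAct \Conf$ is again a stable configuration structure. Then Winskel's construction (\autoref{thm:ESofConf}) applied to $x \grpAct \Conf$ yields that $\mathsf{E}(x \grpAct \Conf)$ is a prime event structure, \ie its causality is a partial order, its conflict relation is irreflexive and symmetric, and conflict heredity~\eqref{eq:pch} holds.

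Next, I would invoke \autoref{thm:switch}, which asserts that $\mathsf{E}(x \grpAct \Conf)$ and $\dec{x}_\Conf \grpAct \mathsf{E}(\Conf)$ are isomorphic as (candidate) event structures, via the bijection $\sigma_x$ of Definition~\ref{def:res-map}. To conclude, it suffices to observe that being a prime event structure---that is, satisfying the axioms of Definition~\ref{def:PES}---is preserved by the kind of bijection exhibited in \autoref{thm:switch}, since it preserves causality and conflict in both directions. Thus all the required properties (partial order on $<'$, irreflexivity and symmetry of $\cf'$, and conflict heredity) are inherited from $\mathsf{E}(x \grpAct \Conf)$.

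There is no significant obstacle: the corollary is essentially a direct transfer of structure along the isomorphism of \autoref{thm:switch}, once stability under $\grpAct$ is in hand. The only point requiring a modicum of care is verifying that the isomorphism of \autoref{thm:switch} is indeed an isomorphism of prime event structures (\ie a bijection on events preserving $<$ and $\cf$ in both directions), which is precisely what the statement and proof of \autoref{thm:switch} give, since the argument is run symmetrically for $\sigma_x$ and $\sigma_x^{-1}$. \qed
\end{proofwq}
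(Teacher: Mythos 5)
Your proof is correct and follows exactly the route the paper intends for this corollary (which it leaves implicit): apply \autoref{thm:stable} to get stability of $x\grpAct\Conf$, \autoref{thm:ESofConf} to get that $\mathsf{E}(x\grpAct\Conf)$ is a prime event structure, and then transfer the axioms of Definition~\ref{def:PES} along the bijection $\sigma_x$ furnished by \autoref{thm:switch}. Your closing remark that the bijection preserves and reflects $<$ and $\cf$ (via the symmetric argument for $\sigma_x^{-1}$) is precisely the point the paper relies on, so there is nothing to add.
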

	
	\section{A switch operation on polarized event structure}\label{sec:pointedes}
	
	The switch operation on a configuration $X$ reverses causality within $X$ and interchanges conflict and causality at its boundary (see Definition~\ref{def:switch}). Starting from pointed configuration structures, we derived a polarized event structure (Definition~\ref{def:polarized-event-structure}), where events are classified as either positive or negative based on their orientation relative to a given computational past. We now demonstrate that switching a polarized event structure on a configuration $X$ also reverses the signs of the events within $X$: consuming positive events results in negative ones, while backtracking on negative events restores their positive counterparts.
	
	\begin{definition}
\label{def:switch-polarized}
		Let 
		$\ES[\pi]$ be a polarized event structure. We define its \emph{switching} as:
		\(X\grpAct\ES[\pi] \coloneqq (X\grpAct\ES)[\pi']\)
		where for all $e\in E$, 
		\begin{align*}
			\pi'(e) \coloneqq \begin{dcases*}
				-\pi(e) & if \(e \in X\)\\
				\pi(e) & otherwise.
			\end{dcases*}
		\end{align*}
	\end{definition}
	
	Similarly to the switch operation on prime event structures, 
	the above operation
	does not guarantee to produce
	a polarized event structure, as the negative elements of $X\grpAct\ES[\pi]$ could fail to be downward closed for the causality relation, or exhibit conflict. The theorem below proves that it is not the case.
	
	\begin{theorem}[Polarity switch]\label{thm:pol-switch}
		Let $\Conf[y^\dagger]\in\mc S_E$ be a stable pointed configuration structure.
		For all finite $x\in\Conf$, the polarized event structure  $\dec{x}_\Conf\grpAct\mathsf{E}(\Conf[y^\dagger])$ is isomorphic to $\mathsf{E}(x\grpAct\Conf[y^\dagger])$.
	\end{theorem}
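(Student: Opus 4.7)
The plan is to build on \autoref{thm:switch}, which already provides an isomorphism $\sigma_x \colon \pr(\Conf) \to \pr(x \grpAct \Conf)$ between the underlying prime event structures of $\dec{x}_\Conf \grpAct \mathsf{E}(\Conf)$ and $\mathsf{E}(x \grpAct \Conf)$. What remains is (i) to check that $\sigma_x$ also matches the polarities induced by $y^\dagger$ on the one side and by $x \Del y^\dagger$ on the other, and (ii) to verify that $\dec{x}_\Conf \grpAct \mathsf{E}(\Conf[y^\dagger])$ really is a polarized event structure, \ie that its set of negative events forms a configuration of the underlying event structure.

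Unfolding \autoref{def:switch-polarized} and \autoref{prop:polarized-winksel}, a prime $p \in \pr(\Conf)$ is negative in $\dec{x}_\Conf \grpAct \mathsf{E}(\Conf[y^\dagger])$ exactly when one but not both of $p \subseteq x$ and $p \subseteq y^\dagger$ holds (using that for $p$ prime, $p \in \dec{x}_\Conf$ iff $p \subseteq x$), while $\sigma_x(p)$ is negative in $\mathsf{E}(x \grpAct \Conf[y^\dagger]) = \mathsf{E}((x \grpAct \Conf)[x \Del y^\dagger])$ exactly when $\sigma_x(p) \subseteq x \Del y^\dagger$. The key auxiliary observation, an immediate corollary of \autoref{prop:prime-prop} and \autoref{prop:unicity}, is that in a stable configuration structure $\Conf$ a prime $p$ satisfies $p \subseteq z$ iff $\der\Conf(p) \in z$. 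Combined with $\der{x \grpAct \Conf}(\sigma_x(p)) = \der\Conf(p)$ from \autoref{def:res-map}, this gives both $p \subseteq z$ iff $\der\Conf(p) \in z$ for $z \in \Conf$, and $\sigma_x(p) \subseteq w$ iff $\der\Conf(p) \in w$ for $w \in x \grpAct \Conf$. Setting $a \coloneqq \der\Conf(p)$, the polarity matching then collapses to the elementary identity $a \in x \Del y^\dagger$ iff exactly one of $a \in x$ and $a \in y^\dagger$ holds.

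Condition (ii) then follows from (i): $\sigma_x$ bijects $\mathit{Neg}(\dec{x}_\Conf \grpAct \mathsf{E}(\Conf[y^\dagger]))$ onto $\mathit{Neg}(\mathsf{E}(x \grpAct \Conf[y^\dagger]))$, and the latter is a configuration of $\mathsf{E}(x \grpAct \Conf)$ by \autoref{prop:polarized-winksel} applied to the pointed stable structure $(x \grpAct \Conf)[x \Del y^\dagger]$; since $\sigma_x$ is a prime event structure isomorphism by \autoref{thm:switch}, it transports configurations to configurations, so $\mathit{Neg}$ of the left-hand side is a configuration as required. I do not expect a deep obstacle here: the real care lies in unfolding, in the correct order, the two compositions ($\grpAct$-then-$\mathsf{E}$ on the one side, $\mathsf{E}$-then-polarity-switching on the other) down to the level of primes and their derivatives, at which point the argument reduces to the symmetric-difference identity above.
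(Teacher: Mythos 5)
Your proposal is correct and follows essentially the same route as the paper: reuse the isomorphism $\sigma_x$ from \autoref{thm:switch} and verify polarity preservation via the derivative $\der\Conf(p)$ and Propositions~\ref{prop:prime-prop} and~\ref{prop:unicity}, which is exactly what the paper's four-case analysis does, merely packaged by you into the single identity $a \in x \Del y^\dagger$ iff exactly one of $a \in x$, $a \in y^\dagger$. Your explicit transport argument for $\mathit{Neg}$ being a configuration matches the paper's corollary to the theorem.
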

	\begin{proofwq}
		Consider: 
		\begin{align*}
			\ES_0[\pi_0] \coloneqq \mathsf{E}(\Conf[y^\dagger]) &&
			\ES_1[\pi_1] \coloneqq \dec{x}_\Conf\grpAct\ES_0[\pi_0] &&
			\ES_2[\pi_2] \coloneqq \mathsf{E}(x\grpAct\Conf[y^\dagger])
		\end{align*}
		To show that $\ES_1[\pi_1]$ is isomorphic to $\ES_2[\pi_2]$ we must \begin{enumerate*}[label=(\roman*)]
			\item exhibit an isomorphism between $\ES_1$ and $\ES_2$ and \label{item-proof-1}
			\item show that it is polarity preserving.\label{item-proof-2}
		\end{enumerate*}
		We already know from \autoref{thm:switch} that $\ES_1\sim_{\sigma_x}\ES_2$, so we just need to verify \ref{item-proof-2}, \ie $\pi_1(p)=\pi_2(\sigma_x(p))$ for all $p\in\pr(\Conf)$. According to Definition~\ref{def:switch-polarized},  we have two cases:
		\begin{itemize}
			\item If $p\in\dec{x}_\Conf$  then $\pi_1(p)=-\pi_0(p)$. By Proposition~\ref{prop:polarized-winksel}, we have two sub-cases to consider: 
			\begin{itemize}
				\item Suppose $p\subseteq y^\dagger$ and hence $\pi_0(p)<0$ and $\pi_1(p)>0$.  We must verify $\pi_2(\sigma_x(p))>0$. We have $p\subseteq x$ and $p\subseteq y^\dagger$ implies $a\not\in x\Del y^\dagger$ and therefore $\sigma_x(p)\not\subseteq x\Del y^\dagger$ (by Proposition~\ref{prop:prime-prop}). Hence, $\pi_2(\sigma_x(p))>0$. 
				\item Suppose  $p\not\subseteq y^\dagger$ and hence $\pi_0(p)>0$ and $\pi_1(p)<0$.  We must verify $\pi_2(\sigma_x(p))<0$. We have $p\subseteq x$ and $p\not\subseteq y^\dagger$ implies $a\in x\Del y^\dagger$ and therefore $\sigma_x(p)\subseteq x\Del y^\dagger$ (by Proposition~\ref{prop:prime-prop}). Hence, $\pi_2(\sigma_x(p))<0$. 
			\end{itemize}
			\item If $p\not\in\dec{x}_\Conf$ then $\pi_1(p)=\pi_0(p)$. We consider again the two sub-cases of Proposition~\ref{prop:polarized-winksel}:
			\begin{itemize}
				\item Suppose $p\subseteq y^\dagger$ and hence $\pi_0(p)<0$ and $\pi_1(p)<0$.  We must verify $\pi_2(\sigma_x(p))<0$. We have $p\not\subseteq x$ and $p\subseteq y^\dagger$ implies $a\in x\Del y^\dagger$ and therefore $\sigma_x(p)\subseteq x\Del y^\dagger$ (by Proposition~\ref{prop:prime-prop}). Hence, $\pi_2(\sigma_x(p))<0$. 
				\item Suppose  $p\not\subseteq y^\dagger$ and hence $\pi_0(p)>0$ and $\pi_1(p)>0$.  We must verify $\pi_2(\sigma_x(p))>0$. We have $p\not\subseteq x$ and $p\not\subseteq y^\dagger$ implies $a\not\in x\Del y^\dagger$ and therefore $\sigma_x(p)\not\subseteq x\Del y^\dagger$ (by Proposition~\ref{prop:prime-prop}). Hence, $\pi_2(\sigma_x(p))>0$.  \qed
			\end{itemize}
		\end{itemize}
	\end{proofwq}
	
	\begin{corollary}[To \autoref{thm:pol-switch}]
		Let $\Conf\in\mc S_E$ be a stable configuration structure and $y^\dagger\in\Conf$ be a distinguished finite configuration. For all finite $x\in\Conf$, $\dec{x}_\Conf\grpAct\mathsf{E}(\Conf[y^\dagger])$ is a polarized event structure.
	\end{corollary}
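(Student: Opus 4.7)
The plan is to chain together the previous results established in the paper, observing that the corollary is essentially immediate once we recognize which constructions are involved. The key idea is that the right-hand side of the isomorphism in \autoref{thm:pol-switch}, namely $\mathsf{E}(x\grpAct\Conf[y^\dagger])$, is manifestly a polarized event structure by construction, so the switched structure on the left inherits this property by transport along the isomorphism.

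Concretely, I would proceed in three short steps. First, I would unfold $x\grpAct\Conf[y^\dagger]$ using Definition~\ref{def:pointed-residuation}, which gives the pointed configuration structure $(x\grpAct\Conf)[x\Del y^\dagger]$, whose underlying configuration structure $x\grpAct\Conf$ is stable by \autoref{thm:stable}. Second, since $x\Del y^\dagger$ is a finite configuration of $x\grpAct\Conf$ (being the image of the finite configuration $y^\dagger$ under the group action, which preserves finiteness because $x$ is finite), I can apply Proposition~\ref{prop:polarized-winksel} to conclude that $\mathsf{E}(x\grpAct\Conf[y^\dagger])$ is a polarized event structure. Third, \autoref{thm:pol-switch} provides an isomorphism of polarized event structures between $\dec{x}_\Conf\grpAct\mathsf{E}(\Conf[y^\dagger])$ and $\mathsf{E}(x\grpAct\Conf[y^\dagger])$; since the axioms of Definition~\ref{def:polarized-event-structure} (being a prime event structure, and having the set of negative events form a configuration) are preserved under isomorphism of polarized event structures, the switched structure $\dec{x}_\Conf\grpAct\mathsf{E}(\Conf[y^\dagger])$ is itself a polarized event structure.

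There is no real obstacle here, as the substantive work has all been done in \autoref{thm:stable}, Proposition~\ref{prop:polarized-winksel}, and especially \autoref{thm:pol-switch}; the corollary is a clean packaging statement confirming that the switching operation defined syntactically on polarized event structures via Definition~\ref{def:switch-polarized} stays within the category. The only minor point to verify explicitly is that the polarity condition $\mathit{Neg}(\ES[\pi])\in\mathit{Conf}(\ES)$ transports along the isomorphism $\sigma_x$ constructed in the proof of \autoref{thm:pol-switch}, which follows directly from the fact that $\sigma_x$ preserves polarities on the nose, as established in the case analysis of that proof.
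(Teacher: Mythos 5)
Your argument is correct and follows exactly the route the paper intends: the corollary is left without explicit proof because it is the immediate transport, along the isomorphism of \autoref{thm:pol-switch}, of the fact that $\mathsf{E}(x\grpAct\Conf[y^\dagger])$ is a polarized event structure, which holds by Proposition~\ref{prop:polarized-winksel} applied to the pointed structure $(x\grpAct\Conf)[x\Del y^\dagger]$ whose underlying structure is stable by \autoref{thm:stable}. Your explicit checks (finiteness of $x\Del y^\dagger$, membership in $x\grpAct\Conf$, and preservation of the $\mathit{Neg}$ condition by the polarity-preserving bijection $\sigma_x$) are exactly the points the paper leaves implicit, so there is nothing to correct.
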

	
	\section{Discussion}\label{sec:ccl}
	Reversibility is a fundamental property in concurrent systems with applications ranging from debugging to quantum computing, yet its integration with traditional operational semantics remains challenging
	
	In the past two decades, significant research has focused on the design and formalization of reversible concurrent programs~\cite{DanKri04,PhiUli06,Lan_etal11,DomKriVar13,Lan_etal18}. This research typically builds upon a classical, non-reversible operational semantics represented as a transition system, say $\sem{P}\coloneqq (P,\mc P,\to)$, and turns it into a reversible one, say $\sem{P}_\mathsf{rev}\coloneqq(P,\mc R,\to_\mathsf{rev})$, where $\mc P$ and $\mc R$ are the program states that are reachable from the initial state $P$, under the transition relations. Importantly, verifying that $\sem{\cdot}_\mathsf{rev}$ is sound involves proving
	\begin{enumerate*}[label=(\roman*)]
		\item that $\mc P$ and $\mc R$ are essentially the same states, after removing additional decorations or encodings that are required to make the transition system reversible, and\label{cond1}%
		\item that backtracking is causally consistent, \ie that it is not possible to backtrack on a transition without backtracking first on its consequences. \label{cond2}
	\end{enumerate*}
	Condition \ref{cond1} ensures that the reversible semantics does not create new program states that were not reachable in the forward only semantics. Verification techniques to ensure \ref{cond2} have been narrowed down to requiring that $\to_\mathsf{rev}$ satisfies the fundamental \emph{causal consistency} property, which states that all co-initial and co-final sequences of transitions, \ie of the form $P\to_\mathsf{rev}^{a_1}\dots\to_\mathsf{rev}^{a_n}Q$ and $P\to_\mathsf{rev}^{b_1}\dots\to_\mathsf{rev}^{b_q}Q$, can be made \emph{equal} by permuting concurrent steps and cancelling out consecutive inverse ones \cite{DanKri04,Lan_etal14,LPU24,Aub24}. Importantly, proofs of causal consistency are syntax-intensive proofs which require characterizing causality, conflict and independence of transitions for each formalism whose reversible semantics is being established. We let the reader refer to a more axiomatic approach to reversibility~\cite{LPU24} for a principled way of obtaining this result.
	
	The work presented in this paper proposes
approaching reversible computations in a syntax-free manner, assuming that a concurrent program is denoted by the set of its observable configurations. This makes practically no assumption on the nature of the underlying process, as configuration structures are just a set of observations, ordered by inclusion. We then showed that the reversible interpretation of a process can be made apparent by changing the traditional interpretation of the residuation of a configuration structure (Definition~\ref{def:residuation}) \emph{after} some events have occurred: instead of erasing those events, we showed that it becomes possible to observe inverse events using a \emph{symmetric residuation operation} (Definition~\ref{def:sym-residuation}) that is shown to be a group action of configuration structures on themselves (Proposition~\ref{prop:gp-act}).
	
	In a second step, we strengthened the hypothesis on the causal structure of the underlying concurrent process by assuming it can be denoted by a prime event structure. As already well-established in the literature, this corresponds to requiring some stability axioms on its configuration structures (Definition~\ref{def:stable}). Note that stable denotational semantics for concurrent formalisms does not come for free, as multiple formalisms are known to naturally yield unstability \cite{BorSan98,Cri_etal16,Bal_etal17}. 
	At this point we showed that stable configuration structures are preserved  by symmetric residuation (\autoref{thm:stable}), which guarantees that the symmetric residual of a configuration structure is also a prime event structure:
	
	\begin{center}	
		\begin{tikzpicture}
			\node (C) {\(\Conf\)};
			\node(xC)[right = 2cm of C]{$x\grpAct\Conf$};
			\node(E)[below of = C]{$\ES$};
			\node(E')[below of = xC]{$\ES'$};
			\draw [su] (C) -- node[above] {\small\emph{Sym.\ res.}}(xC);
			\draw [su, transform canvas={xshift=2pt}] (C) -- (E);
			\draw [su, transform canvas={xshift=-2pt}] (E) -- (C);
			\draw [su, transform canvas={xshift=2pt}] (xC) -- (E');
			\draw [su, transform canvas={xshift=-2pt}] (E') -- (xC);
			\draw [su, dotted] (E) -- (E');
		\end{tikzpicture}
	\end{center}
	We then showed that one can instantiate the dotted arrow of the above diagram using a switch operation on prime event structures (\autoref{thm:switch}) which turns causality upside-down amongst events of a configuration, and flips conflict and causality relation at its boundary.   
	
	We believe this work lays a foundation for developing a general theory of reversibility in concurrent systems, while also opening up new avenues, some of which we discuss now.	
	
	First, notice that rewriting a configuration structure $\Conf$ into $x\grpAct\Conf$ is a global operation which transforms every configuration of $\Conf$ using symmetric set difference, whereas the switch operation is purely local
	, since the causality and conflict relations are only modified on pairs of events that appear in $x$. This indicates a promising technique to verify causal consistency of a reversible operational semantics: reusing the informal notations introduced above, one first needs to prove that $\sem{P}_\mathsf{rev}$ exhibit a stable causal semantics (its configuration structures are stable), which is a necessary and sufficient condition to establish that prime event structures can be used to denote computations of $P$. It then suffices to demonstrate that the operational semantics is \enquote{switch-like}, \ie a computation event $a:P\to_\mathsf{rev}Q$ produces the effect of a switch on the causal and conflict structures of $P$, which is a predicate on immediate predecessors and successors of $a$, and on events that are in conflict with $a$ as well (see Definition~\ref{def:switch}). Requiring the reversible operational semantics to be stable and switch-like would then be an alternative characterization of being consistent.
	
	Another intriguing perspective comes from the operation of symmetric residuation, that could be studied as a symmetry on partial order. The definition of the operation $x\grpAct\Conf$ requires a set structure and is therefore not directly a transformation of partial orders. However we have shown that the operation does act as a switch on prime event structures, which can be interpreted as particular kinds of 2 sorted directed graph. With this perspective, the switch introduced in this paper becomes an operation close to a Seidel switch \cite{Seidel73}, originally defined as a symmetry on graphs that consists in swapping edges and non edges that are incident on nodes in a set $X$, which is the parameter of the switch. We believe the actual relation between our event structure switch and the Seidel switch is worth investigating as Seidel switches on directed graphs is known to produce symmetries of partial orders \cite{Pac_etal13} and have potential interpretations in quantum computing \cite{Sup_etal16}, which enjoys reversibility of operations prior to measurements.
	
	\bibliographystyle{./entics}
	\bibliography{bib/bibexport}

\begin{thebibliography}{10}
\providecommand{\url}[1]{\texttt{#1}}
\providecommand{\urlprefix}{ }
\providecommand{\eprint}[2][]{\url{#2}}

\bibitem{Abr05}
Abramsky, S., \emph{A structural approach to reversible computation},
  Theoretical Computer Science \textbf{347}, pages 441--464 (2005).
\newline\urlprefix\url{https://doi.org/10.1016/j.tcs.2005.07.002}

\bibitem{Aub24}
Aubert, C., \emph{The correctness of concurrencies in (reversible) concurrent
  calculi}, Journal of Logical and Algebraic Methods in Programming
  \textbf{136}, page 100924 (2024), ISSN 2352-2208.
\newline\urlprefix\url{https://doi.org/10.1016/j.jlamp.2023.100924}

\bibitem{Aubert2020b}
Aubert, C. and I.~Cristescu, \emph{How reversibility can solve traditional
  questions: The example of hereditary history-preserving bisimulation}, in:
  I.~Konnov and L.~Kov\'{a}cs, editors, \emph{31st International Conference on
  Concurrency Theory, {CONCUR} 2020, September 1--4, 2020, Vienna, Austria},
  volume 2017 of \emph{Leibniz International Proceedings in Informatics}, pages
  13:1--13:24, Schloss Dagstuhl--Leibniz-Zentrum f{\"u}r Informatik (2020).
\newline\urlprefix\url{https://doi.org/10.4230/LIPIcs.CONCUR.2020.7}

\bibitem{Aubert2025}
Aubert, C., I.~Phillips and I.~Ulidowski, \emph{Bisimulations and
  reversibility}, in: \emph{Components operationally: Reversibility and system
  engineering}, volume 16065 of \emph{LNCS}, Springer International Publishing
  (2025).
\newline\urlprefix\url{https://doi.org/10.1007/978-3-031-99717-4_3}

\bibitem{Bal_etal17}
Baldan, P., A.~Corradini and F.~Gadducci, \emph{Domains and event structures
  for fusions}, in: \emph{32nd Annual {ACM/IEEE} Symposium on Logic in Computer
  Science, {LICS} 2017, Reykjavik, Iceland, June 20-23, 2017}, pages 1--12,
  {IEEE} Computer Society (2017).
\newline\urlprefix\url{https://doi.org/10.1109/LICS.2017.8005135}

\bibitem{Bar_etal18}
Barylska, K., M.~Koutny, {\L}.~Mikulski and M.~Pi{\k a}tkowski,
  \emph{Reversible computation vs. reversibility in petri nets}, Science of
  Computer Programming \textbf{151}, pages 48--60 (2018), ISSN 0167-6423.
  Special issue of the 8th Conference on Reversible Computation (RC2016).
\newline\urlprefix\url{https://doi.org/10.1016/j.scico.2017.10.008}

\bibitem{BorSan98}
Boreale, M. and D.~Sangiorgi, \emph{A fully abstract semantics for causality in
  the {$\pi$}-calculus}, Acta Inf. \textbf{35}, pages 353--400 (1998).

\bibitem{BouCas87}
Boudol, G. and I.~Castellani, \emph{On the semantics of concurrency: Partial
  orders and transition systems}, in: H.~Ehrig, R.~Kowalski, G.~Levi and
  U.~Montanari, editors, \emph{{TAPSOFT'87}}, volume 249 of \emph{Lecture Notes
  in Computer Science}, pages 123--137, Springer, Berlin, Heidelberg (1987),
  ISBN 3-540-17660-8.
\newline\urlprefix\url{https://doi.org/10.1007/3-540-17660-8_52}

\bibitem{BouCas89}
Boudol, G. and I.~Castellani, \emph{Permutation of transitions: An event
  structure semantics for {CCS} and {SCCS}}, in: J.~W. de~Bakker, W.~P.
  de~Roever and G.~Rozenberg, editors, \emph{Linear Time, Branching Time and
  Partial Order in Logics and Models for Concurrency, School/Workshop,
  Noordwijkerhout, The Netherlands, May 30 - June 3, 1988, Proceedings}, volume
  354 of \emph{Lecture Notes in Computer Science}, pages 411--427, Springer
  (1988), ISBN 3-540-51080-X.
\newline\urlprefix\url{https://doi.org/10.1007/BFb0013028}

\bibitem{BuhTroVit01}
Buhrman, H., J.~Tromp and P.~M.~B. Vit{\'{a}}nyi, \emph{Time and space bounds
  for reversible simulation}, in: F.~Orejas, P.~G. Spirakis and J.~van Leeuwen,
  editors, \emph{Automata, Languages and Programming, 28th International
  Colloquium, {ICALP} 2001, Crete, Greece, July 8-12, 2001, Proceedings},
  volume 2076 of \emph{Lecture Notes in Computer Science}, pages 1017--1027,
  Springer (2001), ISBN 3-540-42287-0.
\newline\urlprefix\url{https://doi.org/10.1007/3-540-48224-5_82}

\bibitem{DomKriVar13}
Cristescu, I., J.~Krivine and D.~Varacca, \emph{A compositional semantics for
  the reversible p-calculus}, in: \emph{28th Annual {ACM/IEEE} Symposium on
  Logic in Computer Science, {LICS} 2013, New Orleans, LA, USA, June 25-28,
  2013}, pages 388--397, {IEEE} Computer Society (2013).
\newline\urlprefix\url{https://doi.org/10.1109/LICS.2013.45}

\bibitem{Cri_etal16}
Cristescu, I., J.~Krivine and D.~Varacca, \emph{Rigid families for the
  reversible {\(\pi\)}-calculus}, in: S.~J. Devitt and I.~Lanese, editors,
  \emph{Reversible Computation - 8th International Conference, {RC} 2016,
  Bologna, Italy, July 7-8, 2016, Proceedings}, volume 9720 of \emph{Lecture
  Notes in Computer Science}, pages 3--19, Springer (2016), ISBN
  978-3-319-40578-0.
\newline\urlprefix\url{https://doi.org/10.1007/978-3-319-40578-0_1}

\bibitem{DanKri04}
Danos, V. and J.~Krivine, \emph{Reversible communicating systems}, in:
  P.~Gardner and N.~Yoshida, editors, \emph{CONCUR 2004 - Concurrency Theory,
  15th International Conference, London, UK, August 31 - September 3, 2004,
  Proceedings}, volume 3170 of \emph{Lecture Notes in Computer Science}, pages
  292--307, Springer (2004), ISBN 3-540-22940-X.
\newline\urlprefix\url{https://doi.org/10.1007/978-3-540-28644-8_19}

\bibitem{DanKri05}
Danos, V. and J.~Krivine, \emph{Transactions in {RCCS}}, in: M.~Abadi and
  L.~de~Alfaro, editors, \emph{{CONCUR} 2005 - Concurrency Theory, 16th
  International Conference, {CONCUR} 2005, San Francisco, CA, USA, August
  23-26, 2005, Proceedings}, volume 3653 of \emph{Lecture Notes in Computer
  Science}, pages 398--412, Springer (2005), ISBN 3-540-28309-9.
\newline\urlprefix\url{https://doi.org/10.1007/11539452_31}

\bibitem{DanKriSob06}
Danos, V., J.~Krivine and P.~Soboci{\'n}ski, \emph{General reversibility}, in:
  R.~M. Amadio and I.~Phillips, editors, \emph{Proceedings of the 13th
  International Workshop on Expressiveness in Concurrency, {EXPRESS} 2006,
  Bonn, Germany, August 26, 2006}, volume 175 of \emph{Electronic Notes in
  Theoretical Computer Science}, pages 75--86, Elsevier (2006).
\newline\urlprefix\url{https://doi.org/10.1016/J.ENTCS.2006.07.036}

\bibitem{DeNMonVaa90}
{De Nicola}, R., U.~Montanari and F.~W. Vaandrager, \emph{Back and forth
  bisimulations}, in: J.~C.~M. Baeten and J.~W. Klop, editors, \emph{{CONCUR}
  '90, Theories of Concurrency: Unification and Extension, Amsterdam, The
  Netherlands, August 27-30, 1990, Proceedings}, volume 458 of \emph{Lecture
  Notes in Computer Science}, pages 152--165, Springer (1990).
\newline\urlprefix\url{https://doi.org/10.1007/BFB0039058}

\bibitem{Sup_etal16}
Dutta, S., B.~Adhikari and S.~Banerjee, \emph{Seidel switching for weighted
  multi-digraphs and its quantum perspective}, ArXiv: Combinatorics  (2016).
\newline\urlprefix\url{https://doi.org/10.48550/arXiv.1608.07830}

\bibitem{Graversen2018}
Graversen, E., I.~Phillips and N.~Yoshida, \emph{Event structure semantics of
  (controlled) reversible {CCS}}, in: J.~Kari and I.~Ulidowski, editors,
  \emph{Reversible Computation - 10th International Conference, {RC} 2018,
  Leicester, UK, September 12-14, 2018, Proceedings}, volume 11106 of
  \emph{Lecture Notes in Computer Science}, pages 102--122, Springer (2018).
\newline\urlprefix\url{https://doi.org/10.1007/978-3-319-99498-7_7}

\bibitem{Lan_etal21}
Lanese, I., D.~Medi{\'c} and C.~A. Mezzina, \emph{Static versus dynamic
  reversibility in {CCS}}, Acta Informatica  (2019).
\newline\urlprefix\url{https://doi.org/10.1007/s00236-019-00346-6}

\bibitem{Lan_etal11}
Lanese, I., C.~A. Mezzina, A.~Schmitt and J.~Stefani, \emph{Controlling
  reversibility in higher-order pi}, in: J.~Katoen and B.~K{\"o}nig, editors,
  \emph{{CONCUR} 2011 - Concurrency Theory - 22nd International Conference,
  {CONCUR} 2011, Aachen, Germany, September 6-9, 2011. Proceedings}, volume
  6901 of \emph{Lecture Notes in Computer Science}, pages 297--311, Springer,
  Berlin, Heidelberg (2011).
\newline\urlprefix\url{https://doi.org/10.1007/978-3-642-23217-6_20}

\bibitem{Lan_etal10}
Lanese, I., C.~A. Mezzina and J.-B. Stefani, \emph{Reversing higher-order pi},
  in: P.~Gastin and F.~Laroussinie, editors, \emph{CONCUR 2010 - Concurrency
  Theory, 21th International Conference, CONCUR 2010, Paris, France, August
  31-September 3, 2010. Proceedings}, volume 6269 of \emph{Lecture Notes in
  Computer Science}, pages 478--493, Springer (2010), ISBN 978-3-642-15374-7.
\newline\urlprefix\url{https://doi.org/10.1007/978-3-642-15375-4_33}

\bibitem{Lan_etal14}
Lanese, I., C.~A. Mezzina and F.~Tiezzi, \emph{Causal-consistent
  reversibility}, Bulletin of the European Association for Theoretical Computer
  Science \textbf{114}, page~17 (2014).

\bibitem{Lan_etal18}
Lanese, I., N.~Nishida, A.~Palacios and G.~Vidal, \emph{Cauder: {A}
  causal-consistent reversible debugger for erlang}, in: J.~P. Gallagher and
  M.~Sulzmann, editors, \emph{Functional and Logic Programming - 14th
  International Symposium, {FLOPS} 2018, Nagoya, Japan, May 9-11, 2018,
  Proceedings}, volume 10818 of \emph{Lecture Notes in Computer Science}, pages
  247--263, Springer, Cham (2018), ISBN 978-3-319-90686-7.
\newline\urlprefix\url{https://doi.org/10.1007/978-3-319-90686-7\_16}

\bibitem{Lanese2021}
Lanese, I. and I.~Phillips, \emph{Forward-reverse observational equivalences in
  {CCSK}}, in: S.~Yamashita and T.~Yokoyama, editors, \emph{Reversible
  Computation - 13th International Conference, {RC} 2021, Virtual Event, July
  7-8, 2021, Proceedings}, volume 12805 of \emph{Lecture Notes in Computer
  Science}, pages 126--143, Springer (2021), ISBN 978-3-030-79836-9.
\newline\urlprefix\url{https://doi.org/10.1007/978-3-030-79837-6_8}

\bibitem{LPU24}
Lanese, I., I.~Phillips and I.~Ulidowski, \emph{An axiomatic theory for
  reversible computation}, ACM Transactions on Computational Logic \textbf{25},
  pages 1--40 (2024).
\newline\urlprefix\url{https://doi.org/10.1145/3648474}

\bibitem{Lie_etal12}
Lienhardt, M., I.~Lanese, C.~A. Mezzina and J.~Stefani, \emph{A reversible
  abstract machine and its space overhead}, in: H.~Giese and G.~Rosu, editors,
  \emph{Formal Techniques for Distributed Systems - Joint 14th {IFIP} {WG} 6.1
  International Conference, {FMOODS} 2012 and 32nd {IFIP} {WG} 6.1
  International Conference, {FORTE} 2012, Stockholm, Sweden, June 13-16, 2012.
  Proceedings}, volume 7273 of \emph{Lecture Notes in Computer Science}, pages
  1--17, Springer, Berlin, Heidelberg (2012), ISBN 978-3-642-30793-5.
\newline\urlprefix\url{https://doi.org/10.1007/978-3-642-30793-5_1}

\bibitem{Mel_etal21}
Melgratti, H., C.~A. Mezzina and G.~Michele~Pinna, \emph{A distributed
  operational view of reversible prime event structures}, in: \emph{2021 36th
  Annual ACM/IEEE Symposium on Logic in Computer Science (LICS)}, pages 1--13
  (2021).
\newline\urlprefix\url{https://doi.org/10.1109/LICS52264.2021.9470623}

\bibitem{lmcs:12332}
Melgratti, H., C.~A. Mezzina and G.~M. Pinna, \emph{A truly concurrent
  semantics for reversible ccs}, Logical Methods in Computer Science
  \textbf{Volume 20, Issue 4}, 20 (2024), ISSN 1860-5974.
\newline\urlprefix\url{https://doi.org/10.46298/lmcs-20(4:20)2024}

\bibitem{Mil89}
Milner, R., \emph{Communication and Concurrency}, {PHI} Series in computer
  science, Prentice-Hall (1989), ISBN 978-0-13-115007-2.

\bibitem{MonPis97}
Montanari, U. and M.~Pistore, \emph{Minimal transition systems for
  history-preserving bisimulation}, in: R.~Reischuk and M.~Morvan, editors,
  \emph{{STACS} 97, 14th Annual Symposium on Theoretical Aspects of Computer
  Science, L{\"{u}}beck, Germany, February 27 - March 1, 1997, Proceedings},
  volume 1200 of \emph{Lecture Notes in Computer Science}, pages 413--425,
  Springer (1997).
\newline\urlprefix\url{https://doi.org/10.1007/BFB0023477}

\bibitem{NiePloWin81}
Nielsen, M., G.~D. Plotkin and G.~Winskel, \emph{Petri nets, event structures
  and domains, part {I}}, Theoretical Computer Science \textbf{13}, pages
  85--108 (1981).
\newline\urlprefix\url{https://doi.org/10.1016/0304-3975(81)90112-2}

\bibitem{Pac_etal13}
Pach, P.~P., M.~Pinsker, A.~Pongr{\'a}cz and C.~Szab{\'o}, \emph{A new
  operation on partially ordered sets}, Journal of Combinatorial Theory, Series
  A \textbf{120}, pages 1450--1462 (2013), ISSN 0097-3165.
\newline\urlprefix\url{https://doi.org/https://doi.org/10.1016/j.jcta.2013.04.003}

\bibitem{PhiUli06}
Phillips, I. and I.~Ulidowski, \emph{Reversing algebraic process calculi}, in:
  L.~Aceto and A.~Ing{\'{o}}lfsd{\'{o}}ttir, editors, \emph{Foundations of
  Software Science and Computation Structures, 9th International Conference,
  {FOSSACS} 2006, Held as Part of the Joint European Conferences on Theory and
  Practice of Software, {ETAPS} 2006, Vienna, Austria, March 25-31, 2006,
  Proceedings}, volume 3921 of \emph{Lecture Notes in Computer Science}, pages
  246--260, Springer (2006), ISBN 3-540-33045-3.
\newline\urlprefix\url{https://doi.org/10.1007/11690634_17}

\bibitem{PhiUli12}
Phillips, I. and I.~Ulidowski, \emph{A hierarchy of reverse bisimulations on
  stable configuration structures}, Mathematical Structures in Computer Science
  \textbf{22}, pages 333--372 (2012), ISSN 0960-1295.
\newline\urlprefix\url{https://doi.org/10.1017/S0960129511000429}

\bibitem{PhiUli15}
Phillips, I. and I.~Ulidowski, \emph{Reversibility and asymmetric conflict in
  event structures}, Journal of Logical and Algebraic Methods in Programming
  \textbf{84}, pages 781--805 (2015), ISSN 2352-2208. ``Special Issue on Open
  Problems in Concurrency Theory''.
\newline\urlprefix\url{https://doi.org/10.1016/J.JLAMP.2015.07.004}

\bibitem{Pra87}
Prasad, K., \emph{Combinators and bisimulation proofs for restartable systems},
  Ph.D. thesis, University of Edinburgh (1987).

\bibitem{Seidel73}
Seidel, J.~J., \emph{A survey of two-graphs}, Geometry and Combinatorics pages
  146--176 (1991).
\newline\urlprefix\url{https://doi.org/10.1016/B978-0-12-189420-7.50018-9}

\bibitem{Sel08}
Selinger, P., \emph{Lecture notes on the lambda calculus}, arXiv preprint
  arXiv:0804.3434  (2008).

\bibitem{She_etal03}
Shende, V.~V., A.~K. Prasad, I.~L. Markov and J.~P. Hayes, \emph{Synthesis of
  reversible logic circuits}, {IEEE} Transactions on Computer-Aided Design of
  Integrated Circuits and Systems \textbf{22}, pages 710--722 (2003).
\newline\urlprefix\url{https://doi.org/10.1109/TCAD.2003.811448}

\bibitem{rc2020}
Ulidowski, I., I.~Lanese, U.~P. Schultz and C.~Ferreira, editors,
  \emph{Reversible Computation: Extending Horizons of Computing - Selected
  Results of the {COST} Action {IC1405}}, volume 12070 of \emph{Lecture Notes
  in Computer Science}, Springer (2020), ISBN 978-3-030-47360-0.
\newline\urlprefix\url{https://doi.org/10.1007/978-3-030-47361-7}

\bibitem{Uli_etal18}
Ulidowski, I., I.~Phillips and S.~Yuen, \emph{Reversing event structures}, New
  Generation Computing \textbf{36}, pages 281--306 (2018).
\newline\urlprefix\url{https://doi.org/10.1007/S00354-018-0040-8}

\bibitem{GlaPlo95}
van Glabbeek, R.~J. and G.~D. Plotkin, \emph{Configuration structures}, in:
  \emph{Proceedings, 10th Annual {IEEE} Symposium on Logic in Computer Science,
  San Diego, California, USA, June 26-29, 1995}, pages 199--209, {IEEE}
  Computer Society (1995).
\newline\urlprefix\url{https://doi.org/10.1109/LICS.1995.523257}

\bibitem{GlaPlo09}
van Glabbeek, R.~J. and G.~D. Plotkin, \emph{Configuration structures, event
  structures and petri nets}, Theoretical Computer Science \textbf{410}, pages
  4111--4159 (2009).
\newline\urlprefix\url{https://doi.org/10.1016/j.tcs.2009.06.014}

\bibitem{Win82}
Winskel, G., \emph{Event structure semantics for {CCS} and related languages},
  in: M.~Nielsen and E.~M. Schmidt, editors, \emph{Automata, Languages and
  Programming, 9th Colloquium, Aarhus, Denmark, July 12-16, 1982, Proceedings},
  volume 140 of \emph{Lecture Notes in Computer Science}, pages 561--576,
  Springer (1982), ISBN 3-540-11576-5.
\newline\urlprefix\url{https://doi.org/10.1007/BFb0012800}

\end{thebibliography}

\end{document}